\newcommand{\url}{}
\theoremstyle{plain}
\newtheorem{thm}{THEOREM}[section]
\newtheorem{lm}[thm]{LEMMA}
\newtheorem{prop}[thm]{PROPOSITION}
\theoremstyle{definition}
\theoremstyle{definition}
\newtheorem{remark}[thm]{Remark}
\newcommand{\R}{{\mathord{\mathbb R}}}
\newcommand{\cA}{{\mathord{\cal A}}}
\newcommand{\cB}{{\mathord{\cal B}}}
\renewcommand{\|}{{\Vert}}
\newcommand{\q}{\mathbf{q}}
\renewcommand{\v}{\mathbf{v}}
\newcommand{\E}{\mathbf{E}}
\newcommand{\F}{\mathbf{F}}
\renewcommand{\j}{\mathbf{j}}
\def\dd{\mathrm{d}}
\def\X{\mathbf{X}}
\def\V{\mathbf{V}}
\def\v{\mathbf{v}}
\def\W{\mathbf{W}}
\def\w{\mathbf{w}}
\def\Q{\mathbf{Q}}
\def\q{\mathbf{q}}
\def\n{\mathbf{n}}
\def\m{\mathbf{m}}
\def\cC{\mathcal{C}}
\numberwithin{equation}{section}
\begin{document}

\markboth{\scriptsize{BCELM\today}}{\scriptsize{BCELM \today}}

\title{{P}ropagation of Chaos for a Thermostated Kinetic Model}

\author{\vspace{5pt} 
F. Bonetto$^1$, E.A. Carlen$^{2}$, R. Esposito$^{3}$, J.L. Lebowitz$^{4}$ and R. Marra$^{5}$\\
\vspace{5pt}\small{$1.$ School of Mathematics, Georgia Institute of Technology,}\\[-6pt]
\small{
Atlanta GA 30332 USA}\\
\vspace{5pt}\small{$2,3.$ Department of Mathematics, Hill Center, Rutgers University,}\\[-6pt]
\small{
110 Frelinghuysen Road, Piscataway NJ 08854-8019 USA}\\
\vspace{5pt}\small{$4.$ International Research Center M\&MOCS, Univ. dell'Aquila,}\\[-6pt]
\small{
Cisterna di Latina, (LT) 04012 Italy}\\
\vspace{5pt}\small{$5.$ Dipartimento di Fisica and Unit\`a INFN, Universit\`a di Roma Tor Vergata}\\[-6pt]
\small{
100133 Roma, Italy}\\
}
\maketitle
\footnotetext                                                                         
[2]{Work partially
supported by U.S. National Science Foundation
grant DMS 1201354.    }                           
\footnotetext
[3]{Work partially
supported by U.S. National Science Foundation
grant PHY 0965859.}
\footnotetext
[5]{Work partially
supported by MIUR and GNFM-INdAM.\\
\copyright\, 2012 by the authors. This paper may be
reproduced, in its
entirety, for non-commercial purposes.}

\medskip

\centerline{\it Dedicated to Herbert Spohn in friendship and appreciation}
\medskip

\begin{abstract}
 We consider a system of $N$ point particles moving on a $d$-dimensional torus 
$\mathbb{T}^d$. Each particle is subject to a uniform field $\E$ and random 
speed conserving collisions $\v_i\to\v_i'$ with $|\v_i|=|\v_i'|$. This model is 
a variant of the   Drude-Lorentz  model of electrical conduction \cite{AM}. In 
order to avoid heating by the external field, the particles also interact with a 
Gaussian thermostat which keeps the total kinetic energy of the system constant. 
The thermostat induces a mean-field type of interaction between the particles. 
Here we prove that, starting from a product measure,  in the limit $N\to\infty$, 
the one particle velocity distribution $f(\q,\v,t)$ satisfies a self consistent 
Vlasov-Boltzmann equation, for all finite time $t$.   This is a consequence of 
``propagation of chaos'', which we also prove for this model. 
\end{abstract}

\section{Introduction}

The derivation of autonomous kinetic equations describing the time evolution of 
the one particle phase-space distribution function $f(\q,\v,t)$ in a macroscopic 
system of $N$ interacting particles, $N\gg1$, is a central problem in 
non-equilibrium statistical mechanics. Examples of such equations are the 
Boltzmann (BE), Vlasov (VE) and Boltzmann-Vlasov (BVE) equations \cite{Neu}. All 
of these equations have a quadratic non-linearity, and their derivation from the 
$N$-particle microscopic dynamics requires proving (or assuming) that the two 
particle distribution function, $f^{(2)}(\q,\q',\v,\v',t)$ factorizes (in an 
appropriate sense) for times $t$, $0<t<T$, as a product $f(\q,\v,t)f(\q',\v',t)$ 
when $N\to\infty$ given that the $N$-particle distribution has a factorized form 
at $t=0$. Such a property of the $N$-particle dynamics, which goes under the 
name of ``propagation of chaos'' (POC),  was first proved by Marc Kac for a  
spatially homogeneous stochastic model leading to a BE \cite{Kac}. In his model, 
one picks a pair of particles at random and changes their velocities 
$(\v,\w)\mapsto (\v',\w')$ as if they had undergone an energy conserving 
collision. Kac's proof makes essential use of a boundedness property of the 
generator of the master equation (Kolmogorov forward equation) corresponding to 
the Kac process. This boundedness property is valid for a {\em Maxwellian 
collision model} in which the rate of collision is independent of the speed of 
the particles. It is not valid for hard-sphere collisions, for which the rate 
depends on single-particle speeds, which can become arbitrarily large with large 
$N$ even when the energy per particle is of unit order. 

Propagation of chaos for the Kac model with hard sphere collisions was studied 
by McKean and his student Grunbaum \cite{AG}.  The first complete proof was 
given by Sznitman \cite{SA} in 1984. These authors used a  more probabilistic 
methodology, and avoided direct analytic estimates on the master equation. This 
method has been extended to more general collisions and to higher dimensions by 
many authors: for the latest results, which are {\em quantitative},  see 
\cite{MM}. This method was also used by Lanford in deriving the BE, for short 
times $T$, for a system model of hard spheres evolving under Newtonian dynamics, 
in the Boltzmann-Grad  limit, see \cite{La} and book and articles by H. Spohn 
\cite{Sp,Spb}.

The derivation of the VE for a weakly interacting Hamiltonian system was first 
done by Braun and Hepp \cite{BrH}. They  considered a system of $N$-particles in 
which there is a smooth force on each particle that is a sum of contributions 
from $O(N)$ particles with each term being of $O(1/N)$, see also Lanford 
\cite{La}. There is then, in the limit $N\to\infty$, an essentially 
deterministic smooth force on a given particle and the evolution of the one 
particle distribution is then described by the VE. 

There are no rigorous derivations of the VBE where, in addition to the smooth 
collective force, there are also discontinuous collisions.  Actually, even the 
appropriate scaling limit in which the VBE could be derived is not obvious: the 
collision part is obtained in the Grad-Boltzmann limit which means low density; 
the Vlasov force is obtained in the mean field limit. The two regimes could be 
incompatible. In \cite{ba},  it is shown that this is not the case, by proposing 
a scaling limit in which the VBE is derived formally from the BBGKY hierarchy.

In this paper we go some way towards deriving such a VBE.  The system we study 
is a variation of the Drude-Lorentz model of electrical conduction. In this 
model, the electrons are described as point particles moving in a constant 
external electric field $\E$ and undergoing energy conserving collisions with  
much heavier ions. In order to avoid the gain of energy by the electrons from 
the field, we use a Gaussian thermostat that fixes the total kinetic energy of 
the $N$-particle system.

A model in which one particle moves on a two dimensional torus with fixed disc 
scatterers subject to a field $\E$ and a Gaussian thermostat was originally 
introduced by Moran and Hoover \cite{MH}.  Exact results for this system were 
derived in \cite{CELS}. More information, both analytic and numerical, on this 
system for $N=1$ can be found in \cite{BCKL, BCKLs}. There are no rigorous 
results for the deterministic case when $N>1$, which was first studied in 
\cite{BDLR}, and further studied in \cite{BCKL2}

To obtain a many-particle model model that is amenable to analysis, it is 
natural to replace the deterministic collisions with the fixed disc scatterer by 
stochastic collisions. This too was done in \cite{BDLR} which  introduced  a 
variant of the model in which the deterministic collisions are replaced by 
``virtual collisions''  in which each  particle undergoes  changes in the 
direction of its velocity, as in a collision with a disc scatterer, but now with 
the redirection of the velocity being random, and the ``collision  times'' 
arriving in a Poisson stream, with a rate proportional to the particle speed. 
This variant is amenable to analysis for $N>1$, as shown in  \cite{BDLR}.  In 
particular, \cite{BDLR}  presented an heuristic  derivation of a BE for  this 
many-particle system. To make this derivation rigorous, one must prove POC for 
the model, which is the aim of the present paper.

In this class of models the collisions  involve only one particle at a time, and 
are not the source of any sort of correlation. The particles interact only 
through the Gaussian thermostat. Dealing with this is naturally quite different 
from dealing with the effects of binary collisions. On a technical level, the 
novelty of the present paper is to provide means to deal with the correlating 
effects of the Gaussian thermostat. To keep the focus on this issue, we make one 
more simplifying assumption on the collisions: In the model we treat here, the 
virtual collision times arrive in a Poisson stream that is independent of 
particle speeds. In other words, our virtual collisions are ``Maxwellian''.  
However, one still cannot use Kac's strategy for proving POC in this case since 
the terms in the generator of the process representing the Gaussian thermostat 
involve derivatives, and are unbounded. Because the virtual collisions are not 
responsible for any correlation between the particles, the assumption of 
Maxwellian collisions is for technical reasons only. Later in the paper, when it 
is clear how we are using the Maxwellian property, we shall explain how we 
expect to be able to build on our present results and treat head-sphere virtual 
collisions as well. Interesting generalizations of the Gaussian thermostat 
problem that involve binary collisions have been studied by Wennberg and his 
students \cite{WW}. 
 
We now turn to a careful specification of the model we consider. 

\subsection{The Gaussian thermostatted particle system} 

The microscopic configuration of our system is given by 
$\X=(\q_1,\ldots,\q_N,\v_1,\ldots,\v_N)=(\Q,\V)$, with $\q_i\in\mathbb{T}^d$, 
$\v_i\in\R^d$. The time evolution of the system between collisions is given by:
\begin{equation}\label{motion}
\begin{cases}
\dot\q_i=\v_i\ ,&\qquad\qquad\q_i\in\mathbb{T}^d\ ,\\
\dot\v_i=\E-\frac{\displaystyle\E\cdot\j(\V)}{\displaystyle 
u(\V)}\v_i:=\F_{i}(\V)\ , 
&\qquad\qquad\v_i\in\mathbb{R}^d\ ,
\end{cases}
\end{equation}
where
\begin{equation}
\j(\V)=\frac1N\sum_{i=1}^N\v_i\ ,\qquad \qquad
u(\V)=\frac1N\sum_{i=1}^N\v_i^2\ .
\end{equation}
The term $\frac{\E\cdot\j}{u}\v_i$ in \eqref{motion} represents the Gaussian 
thermostat. It ensures that the kinetic energy per particle $u(\V)$ is constant 
in time. Each particle also undergoes collisions with ``virtual'' scatterers at 
random times according to a Poisson process with unit intensity which is 
independent of its velocity or position. A collision changes the direction but 
not the magnitude of the particle velocity. When a ``virtual'' collision takes 
place with incoming velocity $\v$, a unit vector $\hat\n$ is randomly selected 
according to a distribution $k(\hat\v\cdot\hat\n)$, where $\hat\v=\v/|\v|$, and 
the velocity of the particle is changed from $\v$ to $\v'$, where
\[
\v'=\v-2(\v\cdot\hat \n)\hat\n\ .
\]
For $d=1$ this means that $\v'=-\v$. 

\begin{remark}\label{equlik}
Observe that since $\hat\n$ and $-\hat\n$ both yield the same outgoing velocity,
we may freely suppose that $k$ is an even function on $[-1,1]$. Since
$\hat\v\cdot\hat\v'  = 1 -2(\hat\v\cdot\hat\n)^2$, the two collision $\v \to
\v'$ and $\w \to \w'$ are equally likely if
\[
\hat \v \cdot \hat \v' = \hat\w \cdot \hat \w'\ .
\]
Moreover, for any two vectors $\v$ and
$\v'$ satisfying $|\v| = |\v'|$ there is a collision that takes $\v$ into $\v'$:
Simply take $\hat\n = (\v - \v')/(|\v - \v'|)$.
\end{remark}

In the absence of an electric field, $\E=0$, each particle would move 
independently keeping its kinetic energy $\v_i^2$ constant while the collisions 
would cause the velocity $\v_i$ to become uniformly distributed on the $(d-1)$ 
dimensional sphere of radius $|\v_i|$. The force exerted by the Gaussian 
thermostat on the $i$-th particle, $(\E\cdot\j)\v_i/u$ induces, through $\j$, a 
mean-field type of interaction between the particles. When a particle speeds up 
(slows down) due to the interaction with the field some other particles have to 
slow down (speed up).

The probability distribution $W(\Q,\V,t)$ for the position and velocity of the 
particles will satisfy the master equation
\begin{equation}\label{master}
 \frac{\partial W(\Q,\V,t)}{\partial t}+\sum_{i=1}^N
\v_i\frac{\partial}{\partial \q_i}W(\Q,\V,t)+ \sum_{i=1}^N
\frac{\partial}{\partial
\v_i}\left[\left(\E-\frac{\E\cdot\j(\V)}{u(\V)} \v_i\right)W(\Q,\V,t)\right]=\cC
W(\Q,\V,t) \ ,
\end{equation}
where
\begin{equation}
 \cC W(\V,t)=\sum_{j=1}^N\int\left [
W(\V'_j,t)-W(\V,t)\right]k(\hat\v_j'\cdot\hat \n)\,d\hat\n\ ,
\end{equation}
$\V_j'=(\v_1,\ldots,\v_{j-1},\v'_j,\v_{j+1},\ldots,\v_N)$ and the integration is 
over the angles $\n$ such that $\v_j'\to\v_j$. Eq.\eqref{master} is to be solved 
subject to some initial condition $W(\Q,\V,0)=W_0(\Q,\V)$. It is easy to see 
that the projection of $W(\Q,\V,t)$ on any energy surface $u(\V)=e$ will evolve 
autonomously. This evolution will satisfy the D\"oblin condition \cite{Doe} for 
$\E\not=0$. Hence for any finite $N$ the effect of the stochastic collisions 
combined with the thermostated force is to make the $W(\Q,\V,t)$ approach a 
unique stationary state on each energy surface as $t\to\infty$. In fact this 
approach will be exponential for any finite $N$. However in this paper we rather 
focus on the time evolution of the one particle marginal distribution 
$f(\q,\v,t)$ for fixed $t$ as $N\to\infty$.

Now it seems reasonable to believe that in this limit one can use the law of 
large numbers to replace $\j(\V)$ in \eqref{master} by its expectation value 
with respect to $W$. When this is true, starting from a product measure
\begin{equation}\label{prod}
 W_0(\Q,\V)=\prod_{i=1}^N f_0(\q_i,\v_i)\ ,
\end{equation}
the system will stay in a product measure
\[
 W(\Q,\V,t)=\prod_{i=1}^N f(\q_i,\v_i,t)\ ,
\]
and $f(\q,\v,t)$ will satisfy an autonomous non-linear Vlasov-Boltzmann
equation (VBE)
\begin{equation}\label{BE}
 \frac{\partial f(\q,\v,t)}{\partial t}+ \v\frac{\partial}{\partial
\q}f(\q,\v,t) +\frac{\partial}{\partial \v}
\left[\left(\E-\frac{\E\cdot\tilde\j(t)}{\tilde u} \v\right)f(\q,\v,t)\right]=
\int
k(\hat\v'\cdot\hat\n)[f(\q,\v',t)-f(\q,\v,t)]d\hat\n
\end{equation}
where $\tilde \j(t)$ and $\tilde u$ are given by
\begin{equation}\label{cur}
 \tilde \j(t)=\int_{\R^d\times\mathbb{T}^d}\v f(\q,\v,t)\,{\rm d}\q{\rm d}\v\ ,
\qquad \qquad\tilde u=\int_{\R^d\times\mathbb{T}^d}|\v|^2 f(\q,\v,t)\,{\rm
d}\q{\rm d}\v\ .
\end{equation}
It is easy to check that $d\tilde u/dt=0$ and so $\tilde u(t)=\tilde u(0)$
independent of $t$. 

An important feature of the VBE  (\ref{BE}) is that the current $ \tilde \j(t)$ 
given by (\ref{cur}) satisfies an autonomous equation: \begin{lm} \label{curau} 
Let $f(\q,\v,t) $ be a solution of the VBE (\ref{BE}), and let the corresponding 
current $ \tilde \j(t)$ be defined by (\ref{cur}). Then $ \tilde \j(t)$ 
satisfies the equation 
\begin{equation}\label{cureq}
 \frac{d\tilde\j}{dt}=\E-\frac{\E\cdot\tilde\j}{\tilde 
u}\tilde\j-\rho_k\tilde \j\ ,
\end{equation}
where, for $d\geq 2$, 
\begin{equation}\label{lambdakdef}
\rho_k = 2|S^{d-2}| \int_0^\pi  k(\cos\theta)\cos^2\theta \sin^{d-2}\theta 
d\theta\ ,
\end{equation}
in which  $|S^{d-2}|$ denotes the area of $S^{d-2}$ in $\R^{d-1}$. In 
particular, for $d=2$, $S^{d-2} = \{-1,1\}$, and $|S^{d-2}|  =2$. For $d=1$, we 
have $\rho_k = 1$. 
\end{lm}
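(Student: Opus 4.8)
The plan is to derive \eqref{cureq} by taking the first velocity moment of the VBE \eqref{BE}, i.e.\ by multiplying \eqref{BE} by the component $v_\alpha$ and integrating over $\q\in\mathbb{T}^d$ and $\v\in\R^d$. Writing the single-particle force as $\F=\E-\frac{\E\cdot\tilde\j}{\tilde u}\v$ and the collision operator as $\cC f$, this produces four contributions, one from each term of \eqref{BE}, which I would evaluate separately. Throughout I assume the solution $f$ is smooth and decays rapidly in $\v$, so that the integrations and integrations by parts below are justified, and I use the normalizations $\int f\,\dd\q\,\dd\v=1$ and $\int v_\alpha f\,\dd\q\,\dd\v=\tilde\j_\alpha$ from \eqref{cur}.

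The first three terms are routine. The time-derivative term gives $\frac{d}{dt}\int v_\alpha f\,\dd\q\,\dd\v=\frac{d\tilde\j_\alpha}{dt}$. The transport term vanishes: since $v_\alpha$ does not depend on $\q$, we have $\int v_\alpha(\v\cdot\nabla_\q f)\,\dd\q=\int v_\alpha\,\nabla_\q\cdot(\v f)\,\dd\q=0$ by periodicity on the torus. For the force term I integrate by parts in $\v$, using that $\nabla_\v v_\alpha$ equals the $\alpha$-th standard basis vector, to obtain $-\int F_\alpha f\,\dd\q\,\dd\v=-E_\alpha+\frac{\E\cdot\tilde\j}{\tilde u}\tilde\j_\alpha$, where $F_\alpha=E_\alpha-\frac{\E\cdot\tilde\j}{\tilde u}v_\alpha$. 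Transferring this contribution to the right-hand side reproduces the terms $\E-\frac{\E\cdot\tilde\j}{\tilde u}\tilde\j$ of \eqref{cureq}.

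The heart of the proof is the collision contribution $M_\alpha=\int_{\mathbb{T}^d}\int_{\R^d} v_\alpha\,\cC f\,\dd\v\,\dd\q$, and this is the step I expect to be the main obstacle. Working at fixed $\q$, I first note, via Remark \ref{equlik}, that $\hat\v'\cdot\hat\n=-\hat\v\cdot\hat\n$ and that $k$ is even, so $k(\hat\v'\cdot\hat\n)=k(\hat\v\cdot\hat\n)$. In the gain part I change variables $\v\mapsto\v'=\v-2(\v\cdot\hat\n)\hat\n$ at fixed $\hat\n$: this reflection is an involution with unit Jacobian that preserves $|\v|$ and the kernel and sends $v_\alpha\mapsto v_\alpha-2(\v\cdot\hat\n)n_\alpha$. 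After subtracting the loss part the $v_\alpha$ and $f$ terms cancel, leaving $\int v_\alpha\,\cC f\,\dd\v=-2\int f(\q,\v)\big[\int_{S^{d-1}}(\v\cdot\hat\n)n_\alpha\,k(\hat\v\cdot\hat\n)\,\dd\hat\n\big]\dd\v$.

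It then remains to evaluate the inner angular integral. Decomposing $\hat\n$ into components parallel and orthogonal to $\hat\v$, and noting that the orthogonal part integrates to zero by the rotational symmetry of the measure about the axis $\hat\v$, I obtain $\int_{S^{d-1}}(\v\cdot\hat\n)\hat\n\,k(\hat\v\cdot\hat\n)\,\dd\hat\n=K_2\,\v$, where $K_2=\int_{S^{d-1}}(\hat\v\cdot\hat\n)^2k(\hat\v\cdot\hat\n)\,\dd\hat\n$ is independent of $\hat\v$ by rotational invariance. Hence $\int v_\alpha\,\cC f\,\dd\v=-2K_2\int v_\alpha f\,\dd\v$, and integrating over $\q$ gives $M_\alpha=-2K_2\,\tilde\j_\alpha$. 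Passing to polar coordinates about $\hat\v$, with $\hat\v\cdot\hat\n=\cos\theta$ and $\dd\hat\n=\sin^{d-2}\theta\,\dd\theta\,\dd\Omega_{d-2}$, identifies $2K_2$ with $2|S^{d-2}|\int_0^\pi k(\cos\theta)\cos^2\theta\sin^{d-2}\theta\,\dd\theta=\rho_k$ for $d\ge2$. Assembling the four contributions yields \eqref{cureq} componentwise, hence in vector form. The case $d=1$, where the collision simply reverses the velocity ($\v'=-\v$), follows from the same moment computation carried out directly, giving the stated value of $\rho_k$.
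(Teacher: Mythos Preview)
Your proof is correct and follows essentially the same approach as the paper's. The paper abbreviates by stating that, given the definition of $\tilde\j$, it suffices to compute the collision contribution, and then carries out the same reflection change of variables and polar-coordinate evaluation about the axis $\hat\v$ that you perform; your treatment of the transport and force terms is routine as you say, and your componentwise computation of the collision moment matches the paper's vectorial one step for step.
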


\begin{proof} We treat the case $d\geq 2$. By the definition of  $ \tilde \j(t)$ 
it suffices to show that 
\begin{equation}\label{must}
\int_{\R^d}   \v\left[\int_{S^{d-1}}
 k(\hat\v'\cdot\hat\n)f(\q,\v',t)d\hat\n\right]  d\v  = \tilde \j - 
\rho_k\tilde \j\ \ .
\end{equation}
Since $d\hat\n d\v = d\hat\n d\v'$, the left hand side of (\ref{must}) equals
\[
\int_{\R^d} f(\q,\v',t)\left[   \int_{S^{d-1}}
 \v k(\hat\v'\cdot\hat\n)d\hat\n\right] d\v'\ .
\]
Note  that 
\begin{eqnarray} 
 \int_{S^{d-1}} \v k(\hat\v'\cdot\hat\n)d\hat\n &= &  \int_{S^{d-1}} (\v'  - 
2[\v'\cdot \hat\n]\hat \n )k(\hat\v'\cdot\hat\n)d\hat\n\nonumber\\
&=&  \v'  - 2  \int_{S^{d-1}} (\v'\cdot \hat\n)
k(\hat\v'\cdot\hat\n)\hat \n d\hat\n\ .\nonumber  
\end{eqnarray}
To evaluate this last integral, introduce coordinates in which $\v' = 
(0,\dots,0,|\v'|)$, and parameterize $S^{d-1}$ by $\hat\n = 
(\boldsymbol{\omega}\sin\theta,\cos\theta)$ where $\boldsymbol{\omega}\in 
S^{d-2}$ and  $\theta \in [0,\pi]$.  Then $d\hat \n = \sin^{n-2}\theta 
d\boldsymbol{\omega}d\theta $, and since 
$\int_{S^{d-2}}\boldsymbol{\omega}d\boldsymbol{\omega} = 0$, 
\begin{eqnarray}
2  \int_{S^{d-1}} [\v'\cdot \hat\n]\hat \n k(\hat\v'\cdot\hat\n)d\hat\n  &=&
\left[2|S^{d-2}| \int_0^\pi  k(\cos\theta)\cos^2\theta \sin^{d-2}\theta d\theta\right] (0,\dots,0,|\v'|)\nonumber\\
&=& \rho_k \v'\ .\nonumber
\end{eqnarray}
Combining calculations yields (\ref{must}), and completes the  proof.
\end{proof}

The initial value problem for the current equation (\ref{cureq}) is readily 
solved: Write $\tilde \j = \tilde \j_\| + \tilde \j_\perp$ where $\tilde \j_\| 
= |\E|^{-2}(\tilde \j\cdot \E)\E$ is the component of $\tilde \j$ parallel to 
$\E$. Then defining $y(t) := |\E|^{-1}(\tilde \j \cdot \E)$, 
\begin{equation}\label{cureq2}
 \frac{dy}{dt}=|\E|-\frac{|\E|}{\tilde u}y^2-\rho_ky \qquad{\rm and}\qquad 
  \frac{d\tilde\j_\perp}{dt}=-\frac{|\E| y}{\tilde u}\j_\perp-\rho_k\tilde 
\j_\perp\ .
\end{equation}
The equation for $\tilde \j_\perp$  is easily solved once the equation for $y$ 
is solved. Rescaling $y\to y|E|=z$, and writing the equation for $z$ in the form 
$z' = v(z)$, so that $v(z) = |\E|^2 - z^2/\tilde u - \rho_k z$, we see that 
$v(z) = 0$ at 
\[
z_\pm =  \frac{\tilde u}{2}\left[   -\rho_k \pm 
\sqrt{\frac{4|\E|^2}{\tilde u} + \rho_k^2}  \right] \ ,
\]
and hence 
\[
y_\pm =  \frac{\tilde u}{2|E|}\left[   -\rho_k \pm 
\sqrt{\frac{4|\E|^2}{\tilde u} + \rho_k^2}  \right] \ .
\]
 Since $\rho_k>0$, $y_{- }< -\sqrt{\tilde u}$, and by the definitions of 
$\tilde \j$  and $y$, $y^2 (0) < \tilde u$.  Hence the equation for $y$ has a 
unique solution with $\lim_{t\to\infty}y(t) = y_+$. The equation is  solvable by 
separating variables, but we shall not need the explicit form here, though we 
use the uniqueness below. 

The fact that $\tilde \j$ solves an autonomous equation that has a unique 
solution allows us to proves existence and uniqueness for the spatially 
homogeneous for of the VBE (\ref{BE}); see the appendix.
 
In the large $N$ limit, the mean current of the particle model will satisfy this 
same equation as a consequence of POC: It follows from \eqref{master} that
\[
  \frac{d\langle\j\rangle}{dt}=\E-\E\cdot\left\langle\frac{\j
\j}{u}\right\rangle-\rho_k \langle\j\rangle\ ,
\]
where 
\[
\langle \psi \rangle=\int \psi(\V) W(\Q,\V,t)\,d\V\ .
\]
Propagation of chaos then shows that, as expected, in the limit $N\to\infty$,
\[\left\langle
\frac{\j\j}{u}
\right\rangle=\frac{\tilde\j(t)}{u}
\tilde\j(t)=\frac{\langle\j\rangle}{\langle u\rangle}\langle\j\rangle\ .
\]
Our main result is contained in the following theorem. For its proof we will 
need to control the fourth moment of $f(\q,\v,t)$. We thus set
\[
a(t) = \int_{\R^d\times\mathbb{T}^d}(|\v|^2-\tilde u)^2f(\q,\v,t){\rm d}\q{\rm 
d}\v\ .
\]
We also remind the reader that a function $\varphi$ on $\mathbb{R}^d$ is 
$1$-Lipschitz if for any $\mathbf{x},\mathbf{y}\in\mathbb{R}^d$ we have 
$|\varphi(\mathbf{x})- \varphi(\mathbf{y})|\leq |\mathbf{x}-\mathbf{x}|$.

\bigskip

\begin{thm}\label{chaos} Let $f_0(\q,\v)$ be a probability density on $\R$ with
\begin{equation}\label{cond}
\tilde u= \int_{\R^d\times\mathbb{T}^d} |\v|^2f_0(\q,\v){\rm d}\q{\rm d}\v
<\infty\qquad{\rm
and}\qquad a(0) = \int_{\R^d\times\mathbb{T}^d}
(|\v|^2-\tilde u)^2f_0(\q,\v){\rm d}\q{\rm d}\v  < \infty\ .
\end{equation}
Let $ W(\Q,\V,t)$ be the solution of \eqref{master} with  initial data
$W_0(\Q,\V)$ given by (\ref{prod}).  Then for all $1$-Lipschitz function
$\varphi$ on $\left(\R^{d}\times\mathbb{T}^{d}\right)^2$, and all $t>0$,
\begin{align}\label{chaoseq}
\lim_{N\to\infty} \int_{\R^{dN}\times\mathbb{T}^{dN}}&\varphi(\q_1,\v_1, \q_2,
\v_2)W(\Q,\V,t)
=\crcr
&\int_{\R^{2d}\times\mathbb{T}^{2d}}\varphi(\q_1,\v_1,\q_2,\v_2)f(\q_1,\v_1,
t)f(\q_2,\v_2 , t) { \rm d}\q_1{ \rm d}\v_1{\rm d}\q_2{\rm d}\v_2\ ,
\end{align}
where $f(\q,\v,t)$ is the solution of (\ref{BE}) with initial data $f_0(\q,\v)$.
\end{thm}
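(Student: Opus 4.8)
The plan is to prove \eqref{chaoseq} by a coupling argument of Dobrushin--Sznitman type, exploiting that the only interaction between particles is the mean-field thermostat. The starting point is that, by Lemma \ref{curau} together with $d\tilde u/dt=0$, the limiting drift $\F(t,\v)=\E-\tilde u^{-1}(\E\cdot\tilde\j(t))\v$ is a \emph{deterministic} time-dependent vector field: once $\tilde\j(0)$ is fixed, $\tilde\j(t)$ is determined by the autonomous equation \eqref{cureq}. Hence the associated nonlinear process (a single particle with drift $\F(t,\v)$ undergoing the same Poisson stream of speed-preserving virtual collisions) is well defined, and its time-$t$ law is exactly $f(\cdot,\cdot,t)$. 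First I would introduce $N$ independent copies $(\bar\q_i(t),\bar\v_i(t))$ of this nonlinear process, coupled to the true process $(\q_i(t),\v_i(t))$ by (i) taking identical initial data $(\bar\q_i(0),\bar\v_i(0))=(\q_i(0),\v_i(0))$ drawn i.i.d.\ from $f_0$, so the coupling distance vanishes at $t=0$; (ii) using the same unit-rate Poisson clocks for particle $i$ in both systems; and (iii) at each collision time of particle $i$, selecting the scattering direction for the copy as $\hat\n'=R\hat\n$, where $R$ is a rotation carrying $\hat\v_i$ to $\hat{\bar\v}_i$. Since $k$ depends only on the rotation-invariant quantity $\hat\v\cdot\hat\n$, this gives the correct marginal law for each collision while keeping the two post-collision velocities close.

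The core of the argument is a Gronwall estimate for
\[
\Delta(t) := \Ex\Big[\frac1N\sum_{i=1}^N \big(|\v_i(t)-\bar\v_i(t)|^2 + |\q_i(t)-\bar\q_i(t)|^2\big)\Big].
\]
Between collisions the velocity difference obeys $\frac{d}{dt}(\v_i-\bar\v_i)=-\alpha(\v_i-\bar\v_i)-(\alpha-\bar\alpha)\bar\v_i$, where $\alpha=(\E\cdot\j(\V))/u(\V)$ and $\bar\alpha=(\E\cdot\tilde\j(t))/\tilde u$. The first term is a benign Gronwall contribution once $\alpha$ is bounded; the crucial term is $(\alpha-\bar\alpha)\bar\v_i$, which measures the deviation of the empirical thermostat coefficient from its limit. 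Using energy conservation (so that $u(\V(t))=u(\V(0))$ is constant in time and concentrates at $\tilde u$), one splits $\alpha-\bar\alpha$ into a contribution from $\j(\V(t))-\tilde\j(t)$ and one from $u(\V(0))-\tilde u$. Writing $\j(\V(t))-\tilde\j(t)=\frac1N\sum_i(\v_i-\bar\v_i)+(\frac1N\sum_i\bar\v_i-\tilde\j(t))$, the first piece is dominated by $\sqrt{\Delta(t)}$ and the second is an empirical fluctuation of i.i.d.\ copies about their common mean. After a Cauchy--Schwarz step (the $\bar\v_i$ weight is absorbed into $\frac1N\sum_i|\bar\v_i|^2$) this produces
\[
\frac{d}{dt}\Delta(t)\le C\,\Delta(t)+\varepsilon_N,
\]
where $\varepsilon_N$ collects the mean-field fluctuation terms.

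It then remains to show $\varepsilon_N\to0$ and that the collisions are harmless. For the latter, conjugating the reflection $\v\mapsto\v-2(\v\cdot\hat\n)\hat\n$ by $R$ and using the elementary bound $|\hat a-\hat b|\le 2|a-b|/\max(|a|,|b|)$ shows that the expected post-collision distance is at most $C|\v_i-\bar\v_i|$; since collisions preserve speed, they contribute only a bounded multiple of $\Delta(t)$ to the generator and fit into the Gronwall scheme with no contraction needed. For $\varepsilon_N$, the decisive point is that the fluctuation $\frac1N\sum_i\bar\v_i-\tilde\j(t)$ appears multiplied by the unbounded weight $\frac1N\sum_i|\bar\v_i|^2$; controlling the resulting expectation by Cauchy--Schwarz requires $\Ex|\frac1N\sum_i\bar\v_i-\tilde\j(t)|^4=O(N^{-2})$ together with $\Ex(\frac1N\sum_i|\bar\v_i|^2)^2=O(1)$, i.e.\ uniform-in-$[0,T]$ bounds on the second \emph{and} fourth velocity moments of the nonlinear process. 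These follow by propagating $\tilde u$ (exactly conserved) and $a(t)$: differentiating $\Ex|\bar\v|^4$ along the deterministic drift (collisions preserve $|\bar\v|$ and so drop out) gives $\frac{d}{dt}\Ex|\bar\v|^4\le C(\Ex|\bar\v|^4)^{3/4}+C\,\Ex|\bar\v|^4$, which keeps $a(t)$ finite on $[0,T]$ under \eqref{cond}. The boundedness of $\alpha$ used above is secured by restricting to the event $\{u(\V(0))\ge\tilde u/2\}$, whose complement has probability tending to $0$ and whose contribution to \eqref{chaoseq} is killed by Cauchy--Schwarz against the controlled moments (recall $\tilde u>0$).

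With these ingredients, Gronwall's inequality and $\Delta(0)=0$ yield $\Delta(t)\le\varepsilon_N C^{-1}(e^{Ct}-1)\to0$ for each fixed $t$. Finally, for a $1$-Lipschitz $\varphi$ on $(\R^d\times\mathbb{T}^d)^2$, exchangeability of the coupled system gives $|\Ex\varphi(\q_1,\v_1,\q_2,\v_2)-\Ex\varphi(\bar\q_1,\bar\v_1,\bar\q_2,\bar\v_2)|\le 4\sqrt{\Delta(t)}\to0$, while the copies being i.i.d.\ with law $f(\cdot,\cdot,t)$ identifies the second expectation with the right-hand side of \eqref{chaoseq}, proving the theorem. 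I expect the main obstacle to be precisely the estimation of $\varepsilon_N$: the thermostat couples the particles through the ratio $\j/u$, so the mean-field fluctuation enters weighted by unbounded velocity factors, and closing the Gronwall loop forces one to propagate fourth-order moments uniformly in time. This is where the correlating effect of the Gaussian thermostat, rather than the (single-particle) collisions, must be controlled.
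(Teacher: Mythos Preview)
Your strategy is the same as the paper's: couple the interacting $A$-system to $N$ i.i.d.\ copies of the nonlinear one-particle process (the paper's ``$B$-process'') through shared initial data, shared Poisson clocks, and matched collisions, then control the coupling distance by a Gronwall argument fed by law-of-large-numbers fluctuations, with the fourth-moment bound on $a(t)$ doing the work. The differences are in the implementation. The paper's collision coupling is \emph{isometric} (their Proposition~2.1 produces $|\v'-\w'|=|\v-\w|$ exactly), whereas yours via the rotation $R$ is only Lipschitz; both suffice. More substantively, the paper works \emph{pathwise}: it writes a Trotter formula, bounds the Lyapunov exponent $\|D\Psi_t(\V)\|\le e^{4t/\sqrt{u(\V)}}$, and therefore must control $\inf_{s\le t}u(\widetilde\V_s)$ for the $B$-process over the whole interval (their Lemma~3.1), since $u$ is not conserved by the $B$-flow. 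Your direct Gronwall on $\Delta(t)$ sidesteps this by using that $\alpha=\E\cdot\j/u$ depends only on the $A$-process energy $u(\V_0)$, which \emph{is} conserved; restricting to $\{u(\V_0)\ge\tilde u/2\}$ then suffices. This is a genuine simplification.

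One point in your sketch needs care. After Cauchy--Schwarz you carry the weight $\sqrt{\frac1N\sum_i|\bar\v_i|^2}=\sqrt{u(\bar\V)}$, and in the piece of $\alpha-\bar\alpha$ coming from $\frac1N\sum_i(\v_i-\bar\v_i)$ this produces a term of order $D\sqrt{u(\bar\V)}$ with $D=\frac1N\sum|\v_i-\bar\v_i|^2$. Since $u(\bar\V)$ is not bounded pathwise, this does not close as $C\,\Delta(t)$ without further argument. The clean fix is to write instead $\frac{d}{dt}(\v_i-\bar\v_i)=-\bar\alpha(\v_i-\bar\v_i)-(\alpha-\bar\alpha)\v_i$, so the Cauchy--Schwarz weight becomes $\sqrt{u(\V)}=\sqrt{u(\V_0)}$, which \emph{is} bounded on your good event. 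With that splitting the Gronwall closes with $\varepsilon_N=O(1/N)$ using only second moments of the i.i.d.\ fluctuation $\j(\bar\V)-\tilde\j$ together with $\mathbb{E}|u(\V_0)-\tilde u|^2=a(0)/N$, exactly matching the hypothesis \eqref{cond}. This is also how the paper organizes the force estimate in Proposition~3.4.
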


Because the interaction between the particles does not depend at all on their 
locations, the main work in proving this theorem goes into estimates for 
treating the spatially homogeneous case. Once the spatially homogeneous version 
is proved, it will be easy to treat the general case. 

We now begin the proof, starting with the introduction of an auxiliary process 
that propagates independence. 

\section{The B process.}

By integrating \eqref{master} on the position variables $\Q$ we see that
the marginal of $W$ on the velocities
\[
 W(\V,t)=\int_{\mathbb{T}^{dN}}W(\Q,\V,t)\dd\Q\ ,
\]
satisfies the autonomous equation
\begin{equation}\label{mastera}
\frac{\partial W(\V,t)}{\partial t}+\sum_{i=1}^n
\frac{\partial}{\partial
\v_i}\left[\left(\E-\frac{\E\cdot\j}{u} \v_i\right)W(\V,t)\right]=\cC W(\V,t)\ .
\end{equation}
This implies that, starting from an initial data $f_0(\v)$ in
\eqref{prod} independent from $\q$ the solution of \eqref{master} will
remain independent of $\Q$ and will satisfy \eqref{mastera}. We will thus
ignore the position and study propagation of chaos for \eqref{mastera}. 
Theorem \ref{chaos} will be an easy corollary of the analysis in this section,
as we show at the end of Section 4.

We start the system with an initial state
\begin{equation}\label{proda}
 W_0(\V)=\prod_{i=1}^N f_0(\v)\ .
\end{equation}
Given $f_0(\v)$ we can solve the BE equation \eqref{BE} and find $f(\v,t)$, see 
Appendix A.

To compare the solution of the Boltzmann equation to the solution of the master 
equation \eqref{mastera} we introduce a new stochastic process. Consider a 
system of $N$ particles with the following dynamics between collisions
\begin{equation}\label{motionq}
\left\{
\begin{array}{l l}
\dot\q_i=\v_i\ , &\qquad\qquad\q_i\in\mathbb{T}^d\ ,\crcr
\dot\v_i=\E-\frac{\E\cdot\tilde\j(t)}{\tilde u}\v_i=\widetilde \F_{i}(\V)\ ,
&\qquad\qquad\v_i\in\mathbb{R}^d\ ,
\end{array}
\right.
\end{equation}
where $\tilde\j(t)$ is obtained from the solution of \eqref{cureq}. The
particles also undergo ``virtual'' collision with the same distribution
described after \eqref{motion}. It is easy to see that the master
equation associated to this process is
\begin{equation}\label{masterq}
 \frac{\partial \widetilde W(\V,t)}{\partial t}+\sum_{i=1}^n
\frac{\partial}{\partial
\v_i}\left[\left(\E-\frac{\E\cdot\tilde\j(t)}{\tilde u} \v_i\right)\widetilde
W(\V,t)\right]=\cC \widetilde
W(\V,t)\ ,
\end{equation}
where we have assumed that $\widetilde W(\V,0)$ is independent of $\Q$.
Clearly, under this dynamics each particle evolves independently so
that if the initial condition for this system is of the form $\widetilde
W_0(\V)$
of \eqref{proda} then $\widetilde W(\V,t)$ will also have that form,
\begin{equation}\label{prodat}
 \widetilde W(\V,t)=\prod_i f(\v_i,t)\ ,
\end{equation}
where $f(\v,t)$ solves the spatially homogeneous BE \eqref{BE}.

We shall call the system described by the dynamics \eqref{motionq} the 
``Boltzmann'' or B-system and the original system described by the dynamics 
\eqref{motion} the A-system. We will prove that, for fixed $t$ and $N\to\infty$ 
every trajectory of the A-system with the dynamics \eqref{motion} can be made 
close (in an appropriate sense) to that of the B-system with dynamics 
\eqref{motionq}. It will then follow that, starting the B-system and A-system with the 
same product measure,  the A-system pair correlation function will be close to 
that of the B-system, and since the B-dynamics propagates independence, the 
latter will be a product. In this way we shall see that \eqref{BE} is satisfied.

\subsection{Comparison of the A and B processes.}

To compare the path of the two processes, we first observe that a collision is
specified by its time $t$, the index $i$ of the particle that collides, and the
unit vector $\hat\n$ that specifies the post-collisional velocity. We will call
$\omega=\{(t_k,i_k,\hat\n_k)\}_{k=0}^n$, with $s<t_k<t_{k+1}<t$, a collision
history in $(s,t)$ with $n$ collisions. 

Let $\Psi_{t-s}(\V_s)$ be the flow generated by the autonomous dynamics 
\eqref{motion} and $\Phi_{s,t}(\V_s)$ the flow generated by the non autonomous 
dynamics \eqref{motionq} both starting at $\V_s\in \R^{dN}$ at time $s$. Given a 
collision history $\omega$ in $(s,t)$ with $n$ collisions the A-process  
starting from $\V_s$ at time $s$, has the path
\begin{equation}
\V(t,\V_s,\omega) = \Psi_{t-t_n}\circ
R_n\circ \Psi_{t_n-t_{n-1}}\circ R_{n-1}\cdots
R_1\Psi_{t_1-s}(\V_s)\ ,
\end{equation}
while the B-process has the path
\begin{equation}
\widetilde \V(t,\V_s,\omega) =  \Phi_{t_n,t}\circ
\widetilde R_n\circ \Phi_{t_{n-1},t_n}\circ
\widetilde R_{n-1}\cdots
\widetilde R_1\Phi_{s,t_1}(\V_s)\ ,
\end{equation}
where $R_n$, respectively $\widetilde R_n$, is the updating of the velocity 
$\v_{i_n}$ of the $i_n$-th particle generated by the $n$-th collision in the A  
(resp. B) -dynamics. To simplify notation, here and in what follows, we will use 
$\V_s=\V(s,\V,\omega)$ and $\widetilde \V_s=\widetilde\V(s,\V,\omega)$ for the A 
and B processes.

We want to compare the trajectories of the two processes. We first observe that, 
given two incoming velocities $\v$ and $\w$ and a collision vector $\hat\n$ for 
$\v$ we can always find a corresponding collision vector $\hat\m$ for $\w$ such 
that:
\begin{itemize}
 \item The probability of selecting $\hat\n$ for an incoming velocity $\v$
is equal to the probability of selecting $\hat\m$ for an incoming velocity
$\w$. By Remark~\ref{equlik}, this is equivalent to having 
\begin{equation}\label{elik}
\hat\v\cdot\hat\v'=\hat\w\cdot\hat\w' \ .
\end{equation}

 \item The distance between the two outgoing velocities is equal to the distance
between the incoming velocities, i.e. $|\w'-\v'|=|\w-\v|$.  Since $|\v|= |\v'|$
and $|\w| = |\w'|$, this is equivalent to
\begin{equation}\label{edis}
\hat\v\cdot\hat\w=\hat\v'\cdot\hat\w' \ .
\end{equation}
\end{itemize}

Our claim amounts to:

\begin{prop}
Given any three unit vectors $\hat \v$, $\hat \v'$ and $\hat \w$, we can find a
fourth unit vector $\hat \w'$ such that both (\ref{elik}) and (\ref{edis}) are
satisfied. 
\end{prop}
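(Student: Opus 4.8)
The plan is to reformulate the two geometric requirements as linear constraints on the unknown unit vector $\hat\w'$ and then produce a solution on the unit sphere by a projection argument. Set $c_1=\hat\v\cdot\hat\v'$, $c_2=\hat\v\cdot\hat\w$, and $\mu=\hat\v'\cdot\hat\w$. Conditions (\ref{elik}) and (\ref{edis}) say exactly that $\hat\w'\cdot\hat\w=c_1$ and $\hat\w'\cdot\hat\v'=c_2$. So I am looking for a unit vector whose inner products against the two given unit vectors $\hat\w$ and $\hat\v'$ take the prescribed values $c_1$ and $c_2$; note that $c_1,c_2,\mu$ are just the three pairwise inner products of the given triple $\hat\v,\hat\v',\hat\w$. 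Let $P=\mathrm{span}(\hat\w,\hat\v')$.

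First I would treat the generic case $\mu^2<1$, in which $\hat\w$ and $\hat\v'$ are linearly independent. I look for $\hat\w'=a\hat\w+b\hat\v'+\mathbf{u}$ with $\mathbf{u}\perp P$. The two inner-product conditions become the $2\times2$ system $a+b\mu=c_1$ and $a\mu+b=c_2$, with unique solution $a=(c_1-c_2\mu)/(1-\mu^2)$ and $b=(c_2-c_1\mu)/(1-\mu^2)$. It then remains to pick $\mathbf{u}\in P^\perp$ with $|\mathbf{u}|^2=1-Q$, where $Q=|a\hat\w+b\hat\v'|^2=a^2+b^2+2ab\mu$ is the squared norm of the in-plane part. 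This is possible precisely when $Q\le1$, since $\dim P^\perp\ge d-2\ge1$ for $d\ge3$ leaves room for $\mathbf{u}$.

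The crux is the bound $Q\le1$, and I expect this to be the one step needing care to phrase. The clean way to see it is a symmetry observation: compute the orthogonal projection of the \emph{given} unit vector $\hat\v$ onto $P$. Writing $\mathrm{proj}_P\hat\v=x\hat\w+y\hat\v'$, the relations $\hat\v\cdot\hat\w=c_2$ and $\hat\v\cdot\hat\v'=c_1$ give coefficients $(x,y)=(b,a)$, the transpose of $(a,b)$. Since $|a\hat\w+b\hat\v'|^2=a^2+b^2+2ab\mu=|b\hat\w+a\hat\v'|^2$, this yields $Q=|\mathrm{proj}_P\hat\v|^2\le|\hat\v|^2=1$, the desired inequality; geometrically it is just the statement that orthogonal projection of a unit vector is a contraction. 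In the borderline dimension $d=2$ the same identity gives $Q=1$ automatically, because then $P=\R^2\ni\hat\v$ forces $\mathrm{proj}_P\hat\v=\hat\v$, and $P^\perp=\{0\}$ forces $\mathbf{u}=0$; the construction still closes up.

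Finally I would dispose of the degenerate case $\mu^2=1$, i.e. $\hat\w=\pm\hat\v'$, where $P$ is one-dimensional. Here the hypotheses force the two constraints to become compatible: if $\hat\w=\hat\v'$ then $c_2=\hat\v\cdot\hat\w=\hat\v\cdot\hat\v'=c_1$, and if $\hat\w=-\hat\v'$ then $c_2=-c_1$; in either subcase only the single condition $\hat\w'\cdot\hat\v'=c_2$ survives, and any unit $\hat\w'$ with that inner product exists because $|c_2|\le1$. The case $d=1$ is immediate since then $\hat\v'=-\hat\v$. Throughout, the only genuine obstacle is the norm bound $Q\le1$; once it is recognized, via the coefficient-swap symmetry, as a projection being norm-nonincreasing, the remainder is routine.
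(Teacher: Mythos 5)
Your proof is correct, and it is worth comparing with the paper's, because although both constructions end with the same shape of solution --- an in-plane part plus a multiple of a unit vector orthogonal to $P=\mathrm{span}(\hat\w,\hat\v')$ --- the mechanism is genuinely different. The paper first settles $d=2$ by a rotation trick: choosing a planar rotation $R$ with $\hat\v'=R\hat\v$ and setting $\hat\w'=R\hat\w$, both (\ref{elik}) and (\ref{edis}) hold by rotation invariance of the inner product; for $d\geq3$ it reduces to this planar case by projecting $\hat\v$ onto $P$, normalizing, rotating inside $P$, rescaling by $|P\hat\v|$, and appending an orthogonal component $\sqrt{1-|P\hat\v|^2}\,\hat{\mathbf{z}}$. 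You never use rotations: you solve the $2\times 2$ linear system for the in-plane coefficients $(a,b)$ outright, and you prove that the resulting in-plane vector fits inside the unit sphere via the norm bound $Q\le 1$, obtained from the coefficient-swap identity $Q=|\mathrm{proj}_P\hat\v|^2$ plus the fact that orthogonal projection is a contraction (equivalently, positive semidefiniteness of the Gram matrix of the three given unit vectors). Your route buys three things: it is uniform in dimension, with $d=2$ appearing as the boundary case $Q=1$ rather than requiring a separate argument; it isolates the exact solvability condition instead of hiding it in a geometric construction; and it treats the degenerate case $\hat\w=\pm\hat\v'$ explicitly, a case the paper's $d\geq 3$ argument passes over (there ``the plane spanned by $\v'$ and $\w$'' does not exist, and the compatibility of the two constraints, which you verify, is what saves the day). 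What the paper's route buys is economy and geometric transparency in the planar case, where rotation invariance makes both conditions hold with no computation at all; note also that the two constructions are secretly consistent, since your $\sqrt{Q}$ is exactly the paper's $|P\hat\v|$, so both produce in-plane components of the same length.
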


\begin{proof}
For $d=1$, a solution is given by $\hat\w' =(\hat\v\cdot \hat\v')\hat\w$, i.e. 
$\hat
\w'=-\hat \w$. For $d=2$, let $R$ be any planar rotation such that $\hat \v' =
R\hat \v$, and define $\hat \w' = R\hat \w$. Then (\ref{elik}) and (\ref{edis})
reduce to $\hat \v\cdot R \hat \v =   \hat \w\cdot R \hat \w$ and $\hat \v\cdot
\hat \w =  R\hat \v\cdot  R\hat \w$, so that $\hat \w' = R\hat \w$ is a
solution. 

For $d\geq 3$, we reduce to the planar case as follows:  Let $P$ be the
orthogonal projection onto the plane spanned by $\v'$ and $\w$. Then $\w'$ is a
solution of (\ref{elik}) and (\ref{edis}) if and only if
\begin{equation}\label{edl}
P\hat\v\cdot\hat\v'=\hat\w\cdot P\hat\w' \qquad{\rm and}\qquad   P\hat\v\cdot\hat\w=\hat\v'\cdot P\hat\w' \ .
\end{equation}
Let us seek a solution with $|P\hat \w'| = |P\hat\v|$. If $|P\hat\v| =0$, we
simply choose $\hat\w'$ so that  $P\hat\w' =0$. Otherwise, divide both
\eqref{edl} by $|P\hat\w'|
= |P\hat\v|$, and we have reduced to the planar case: If $R$ is a rotation in
the plane spanned by $\hat\v'$ and $\hat\w$ such that $R\hat\v =
P\hat\v'/|P\hat\v'|$, then we define $P\hat\w' = |P\hat \v|R\hat\w$, and
finally 
\[
\hat\w' =  |P\hat\v|R\hat\w + \sqrt{1 - |P\hat\v|^2}\hat{\mathbf{z}}\ ,
\]
 where $\hat{\mathbf{z}}$ is any unit vector orthogonal to $\hat\v'$ and
$\hat\w'$. 
 \end{proof}

From the proposition it follows that to every collision history $\omega$ for the 
A process we can associate a collision history $p(\omega)$ for the B process 
such that for every collision time $t_k$ we have that
\begin{equation}\label{iso}
 \|\V(t_k^-,\V_0,\omega)-\widetilde \V(t_k^-,\V_0,p(\omega))\|_N=
\|\V(t_k^+,\V_0,\omega)-\widetilde \V(t_k^+,\V_0,p(\omega))\|_N\ ,
\end{equation}
where $\V(t_k^-,\V_0,\omega)$, resp. $\V(t_k^+,\V_0,\omega)$, is the path of the
A-process just before, resp. just after, the $k$-th collision, and similarly
for the B-process, and given a vector $\V\in \R^{dN}$ we set
\[
\|\V\|_N=\left(\frac{1}{N}\sum_{i=1}^N
\v_i^2\right)^{1/2}=\sqrt{\frac{1}{N}(\V\cdot \V)}\ .
\]
This means that we can see the A and B processes as taking place on the same
probability sample space defined by the initial distribution and the collision
histories $\omega$. We will call $\mathbb{P}$ the probability measure on this
space.

The result of this comparison is summarized in the following:

\begin{thm}[Pathwise Comparison of the Two Processes]\label{pathcomp}  
 Let $W_0(\V)$ satisfy \eqref{proda} with $f_0(\v)$ satisfying \eqref{cond}.
Then for all $\epsilon>0$,
\begin{multline*}
\mathbb{P}\left\{  \|\V(t,\V_0,\omega)-\widetilde \V(t,\V_0,\omega)\|_N >
\epsilon\right\} \leq \\
\frac {1}{\epsilon} \exp\left(4t\frac{\sqrt{2}}{\sqrt{\tilde u}}\right)
\frac{E t}{\sqrt{N}} \left[ 1 +  e^{t3E/\sqrt{\tilde u}}\frac{(a(0) + 
3\tilde u^2)^{1/2}}{\tilde u}\right]
+  \frac{4te^{t6E/\sqrt{\tilde u}}(a(0) + 3\tilde u^2)}{\Delta N\tilde u^2}\ ,
\end{multline*}
where $E=|\E|$ and $\Delta>0$ is defined in \eqref{comp3} below.
\end{thm}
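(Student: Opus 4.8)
The plan is to exploit the coupling just constructed. By the Proposition and \eqref{iso}, the pairing $p(\omega)$ places both processes on one sample space and makes each virtual collision an $\|\cdot\|_N$-isometry on the pair $(\V,\widetilde\V)$, so the quantity $D(t):=\|\V(t,\V_0,\omega)-\widetilde\V(t,\V_0,\omega)\|_N$ is continuous across collision times and can change only through the deterministic flows $\Psi$ and $\Phi$ between collisions. Since both processes start from the same $\V_0$ we have $D(0)=0$, and I would control $D(t)$ by a Gronwall argument on each inter-collision interval, patched together using \eqref{iso}. The output is then converted into the stated probability bound by Markov's inequality.

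First I would differentiate $\tfrac12 D(t)^2=\tfrac1{2N}\sum_i|\v_i-\widetilde\v_i|^2$ between collisions and write the force difference $\F_i(\V)-\widetilde\F_i(\widetilde\V)$ as a \emph{transport} term $-\tilde c(t)(\v_i-\widetilde\v_i)$, with $\tilde c(t)=(\E\cdot\tilde\j(t))/\tilde u$, plus a \emph{mean-field} term $-(c(\V)-\tilde c(t))\v_i$, where $c(\V)=(\E\cdot\j(\V))/u(\V)$. The transport term contributes $-\tilde c(t)D^2$, and since $|\tilde\j(t)|\le\sqrt{\tilde u}$ one has $|\tilde c(t)|\le E/\sqrt{\tilde u}$; this is the origin of the exponential prefactor $e^{O(tE/\sqrt{\tilde u})}$. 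The mean-field term, after Cauchy--Schwarz and the conservation $u(\V(t))=u(\V_0)$ under the A-flow, is bounded by $|c(\V)-\tilde c(t)|\,D\sqrt{u(\V_0)}$, so that after dividing by $D$ the inequality reads
\[
\dot D\le\Big(\tfrac{E}{\sqrt{\tilde u}}+\tfrac{E}{\sqrt{u(\V_0)}}\Big)D+\tfrac{E}{\sqrt{u(\V_0)}}\,|\j(\V)-\tilde\j(t)-(\j(\V)-\j(\widetilde\V))|+\tfrac{E\,|u(\V_0)-\tilde u|}{\sqrt{u(\V_0)}\,\sqrt{\tilde u}}\ .
\]

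The heart of the estimate is the size of $c(\V)-\tilde c(t)$. I would split it using $\tfrac1{u(\V_0)}-\tfrac1{\tilde u}=(\tilde u-u(\V_0))/(u(\V_0)\tilde u)$ and split the current difference as $\j(\V)-\tilde\j(t)=[\j(\V)-\j(\widetilde\V)]+[\j(\widetilde\V)-\tilde\j(t)]$. The first bracket is at most $D(t)$ by Cauchy--Schwarz and is absorbed into the Gronwall coefficient above; the second bracket is a genuine source. Here I would use that the B-process propagates independence \eqref{prodat}: the velocities $\widetilde\v_i(s)$ are i.i.d.\ with law $f(\cdot,s)$, so $\j(\widetilde\V(s))-\tilde\j(s)$ is an average of centered i.i.d.\ vectors with $\Ex[|\j(\widetilde\V(s))-\tilde\j(s)|]\le\sqrt{\tilde u/N}$, while $u(\V_0)-\tilde u$ is an average of centered i.i.d.\ scalars whose variance is exactly $a(0)/N$, giving $\Ex[|u(\V_0)-\tilde u|]\le\sqrt{a(0)/N}$. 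Integrating the Gronwall inequality from $D(0)=0$ and taking expectations then yields, on a good event $G$, a bound of the form $\Ex[D(t)\mathbf{1}_G]\lesssim e^{O(tE/\sqrt{\tilde u})}\tfrac{Et}{\sqrt N}\big(1+\sqrt{a(0)}/\tilde u\big)$, which after $\mathbb P(\{D(t)>\epsilon\}\cap G)\le\epsilon^{-1}\Ex[D(t)\mathbf{1}_G]$ produces the first term of the stated estimate.

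The remaining and most delicate issue is the $1/u(\V)$ singularity of the thermostat coefficient: both the Gronwall constant and the source blow up if $u(\V_0)$ is small. I would therefore run the argument on the good event $G=\{u(\V_0)\ge\Delta\}$ with $\Delta$ comparable to $\tilde u$ (the threshold fixed in \eqref{comp3}), on which $1/\sqrt{u(\V_0)}\le\sqrt{2/\tilde u}$ and all constants are controlled, and bound the complement directly, $\mathbb P(\{D(t)>\epsilon\}\cap G^c)\le\mathbb P(G^c)$, estimating $\mathbb P(u(\V_0)<\Delta)$ by Chebyshev through the second/fourth-moment quantity $a(0)+3\tilde u^2$ to obtain the $1/N$ term. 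The hard part will be exactly this interplay: choosing $\Delta$ so that $G^c$ has probability $O(1/N)$ while the growth constants on $G$ stay bounded, and simultaneously handling the self-consistent feedback of $\j(\V)-\j(\widetilde\V)\le D$ into the exponential rate. Once this decomposition is in place, the collision isometry \eqref{iso}, the law-of-large-numbers variance bounds, and Markov's inequality make the rest routine.
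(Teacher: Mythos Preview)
Your approach is correct and in fact more economical than the paper's, but it is genuinely different. The paper does \emph{not} differentiate $D(t)^2$ directly. Instead it writes a Duhamel-type identity for $\Phi_{0,t}(\V)-\Psi_t(\V)$ in terms of the Jacobian $D\Psi_{t-s}$ evaluated along the B-trajectory, bounds $\|D\Psi_{t}(\V)\|\le e^{4t/\sqrt{u(\V)}}$ (Proposition~\ref{lyapunov}), and then must control $u$ along line segments joining the A- and B-trajectories at each collision time. Because those segments touch $\widetilde\V_s$, and the B-flow does \emph{not} conserve energy, the paper needs a separate lemma (Lemma~\ref{sr}) bounding $\sup_{s\le t}u(\widetilde\V_s)^{-1/2}$ by checking Chebyshev at a grid of times spaced by the $\Delta$ of \eqref{comp3}, plus Lemma~\ref{envarlm} to control the growth of $a(t)$; finally a short bootstrap shows the segment energies never drop too low. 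This is the origin of the factors $t/\Delta$, $e^{t6E/\sqrt{\tilde u}}$ and $a(0)+3\tilde u^2$ in the stated bound.

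Your decomposition instead routes every singular $1/u$ through $u(\V_s)=u(\V_0)$, which is conserved by the A-flow, so the only ``bad'' event you need is $\{u(\V_0)<\tilde u/2\}$ at time zero; the feedback term $|\j(\V)-\j(\widetilde\V)|\le D$ is simply absorbed into the linear Gronwall rate, with no bootstrap required. The LLN input you use, $\mathbb E|\j(\widetilde\V_s)-\tilde\j(s)|\le\sqrt{\tilde u/N}$ and $\mathbb E|u(\V_0)-\tilde u|\le\sqrt{a(0)/N}$, is exactly right. What you obtain is therefore a bound of the same structure (an $O(N^{-1/2})/\epsilon$ term plus an $O(N^{-1})$ bad-event term) but with cleaner constants: you do not need $a(t)$ or the grid in $\Delta$, so the specific numerical factors in the theorem statement will not match. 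That is harmless for the application to propagation of chaos, but you should say explicitly that your argument yields a variant of the stated inequality rather than the displayed one verbatim.
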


\section{Proof of Theorem \ref{pathcomp}.}

To prove this theorem we need to find an expression for
$\V(t,\V_0,\omega)-\widetilde\V(t,\V_0,\omega)$.
First observe that between collisions we have
\[
 \Phi_{0,t}(\V)-\Psi_t(\V)=\Psi_{t-s}(\Phi_{0,s}(\V))\Bigr|_{s=0}^t\ .
\]
Differentiating the above expression with respect to $s$ and calling
$D\Psi_{t}$ the differential of the flow $\Psi_t$, i.e.
\[
 \bigl[D\Psi_t(\V)\bigr]_{i,j}=\frac{\partial[\Psi_{t}(\V)]_i}{\partial \v_j},
\]
we get
\begin{equation}\label{trotter}
\Phi_{0,t}(\V)-\Psi_t(\V)=\int_0^t
D\Psi_{t-s}(\Phi_{0,s}(\V))(\F(\Phi_{0,s}(\V))-
\widetilde\F(\Phi_{0,s}(\V)))\dd s\ .
\end{equation}
If the two flows start from different initial condition we get
\begin{eqnarray}
\Phi_{0,t}(\V)-\Psi_t(\widetilde\V)&=&
\int_{t_n}^t
D\Psi_{t-s}(\Phi_{0,s}(\V))(\F(\Phi_{0,s}(\V))-
\widetilde\F(\Phi_{0,s}(\V)))\dd s-
\nonumber\\
&&\int_0^1 D\Psi_{t}(\sigma
\V+(1-\sigma)\widetilde\V)
)(\V -\widetilde\V)\dd \sigma\ .\label{trotter1}
\end{eqnarray}
where $\sigma\V+(1-\sigma)\widetilde\V$ is the segment from $V$ to $\widetilde
V$.

The main idea of the proof is to iterate \eqref{trotter1} collision
after collision, using the fact that the distance between the A and B
trajectory is preserved by the collisions.

To control \eqref{trotter1} we need first to show that
\begin{equation}\label{quant1}
\|D\Psi_{t}(\V)\|=\sup_{\|\W\|_N=1}\|D\Psi_{t}
(\V)\W\|_N \ ,
\end{equation}
can be bounded uniformly in $N$. We will obtain such an estimate in Section 
\ref{Lyap} but our estimate will depend on $u(\V)^{-\frac 12}$. Since the B -process does not preserve the total energy of the particles we will need to show 
that $u(\widetilde \V_s)^{\frac 12}$ is, with a large probability uniformly in 
$N$, bounded away from 0. This is the content of the technical Lemma in Section 
\ref{lemma}.

Having bounded the differential of $\Psi_t$ we will need to show that
\begin{equation}\label{quant2}
\int_0^t \|\F(\widetilde \V_s)-\widetilde\F(\widetilde\V_s)\|_N {\rm d}s
\end{equation}
is small uniformly in $N$. This is the content of Section \ref{quant2} and it is
based on the Law of Large Numbers.

Thus we will have that the distance between the corresponding paths of the two
processes at time $t$ can be estimated in term of the norm \eqref{quant1} of the
differential of $\Psi_{t}$, the integral \eqref{quant2} and the distance just
after the last collision at $t^+_n$. Using \eqref{iso} we can reduce the problem
to an estimate of the distance before the last collision. Finally we can get a
full estimate iterating the above procedure over all the collisions.

\subsection{A Technical Lemma}\label{lemma}

While the propagation of independence is an important advantage of the master
equation (\ref{masterq}), one disadvantage is that, while the A-evolution
(\ref{motion}) conserves the energy $u(\V)$, the B evolution does not: it
only does so in the average.

In what follows we will  need a bound on $\sup_{0 \leq s \leq
t}\left(u(\widetilde \V(s))\right)^{-1/2}$ showing that the event that this
quantity is large has a small probability. 

\begin{lm}\label{sr}
Let $\widetilde \V_t$ be the B process with initial data given by (\ref{proda})
with $f_0(\v)$ satisfying \eqref{cond}. Define

\begin{equation}\label{comp3}
\Delta :=  \frac{\sqrt{\tilde u}}{E} \log \left(\frac{2+2\sqrt{2}}{
1+2\sqrt{2}}\right)\ ,
\end{equation}
with $\tilde u$ given by \eqref{cond} and for any given $t>0$ let $m(t)$ denote 
the least integer $m$ such that
$m\Delta \geq t$. Then

\begin{equation}\label{stabbnd}
 \mathbb{P}\left\{ \sup_{0 \leq s \leq t}\left(u(\widetilde \V(s))\right)^{-1/2}
\geq  \frac{2\sqrt{2}}{\sqrt{\tilde u}}
 \right\} \leq  m(t)\frac{4e^{t6E/\sqrt{\tilde u}}(a(0) + 3\tilde u^2)}{N\tilde
u^2}\ .
\end{equation}

\end{lm}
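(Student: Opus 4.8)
The plan is to reduce the continuous-time supremum to a finite grid of times spaced by $\Delta$, controlling the energy $u(\widetilde\V(s))=\frac1N\sum_i\v_i^2$ deterministically (pathwise) between grid points and probabilistically at the grid points. The key observation is that the virtual collisions preserve each $|\v_i|$, hence leave $u(\widetilde\V(s))$ unchanged, so $u$ is continuous in $s$ and along the B-flow \eqref{motionq} satisfies, between collisions,
\[
\frac{d}{ds}u(\widetilde\V(s)) = 2\,\E\cdot\j(\widetilde\V(s)) - \frac{2\,\E\cdot\tilde\j(s)}{\tilde u}\,u(\widetilde\V(s)).
\]
Using $|\j|\le u^{1/2}$ and $|\tilde\j|\le\tilde u^{1/2}$ (both Cauchy--Schwarz) and writing $v:=u^{1/2}$, this yields the deterministic differential inequality $\frac{dv}{ds}\ge -E-\frac{E}{\sqrt{\tilde u}}\,v$ with $E=|\E|$, valid for \emph{every} collision history $\omega$. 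By the comparison principle $v(s)$ dominates the solution of the associated linear ODE, which is decreasing; a direct computation shows that the definition \eqref{comp3} of $\Delta$ is exactly the time it takes the worst-case comparison solution started at $v=\sqrt{\tilde u/2}$ to decrease to $\sqrt{\tilde u}/(2\sqrt2)$. Hence, pathwise, $u(\widetilde\V(t_j))\ge\tilde u/2$ forces $u(\widetilde\V(s))\ge\tilde u/8$ for all $s\in[t_j,t_j+\Delta]$.

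Setting $t_j=j\Delta$ and using $[0,t]\subseteq\bigcup_{j=0}^{m(t)-1}[t_j,t_{j+1}]$, the previous step gives the event inclusion
\[
\Big\{\sup_{0\le s\le t} u(\widetilde\V(s))^{-1/2}\ge \tfrac{2\sqrt2}{\sqrt{\tilde u}}\Big\}\ \subseteq\ \bigcup_{j=0}^{m(t)-1}\big\{u(\widetilde\V(t_j))<\tfrac{\tilde u}{2}\big\},
\]
so by the union bound it suffices to estimate $\mathbb{P}(u(\widetilde\V(t_j))<\tilde u/2)$ for each fixed $j$. At the fixed time $t_j$ the B-dynamics propagates the product form \eqref{prodat}, so $\v_1(t_j),\dots,\v_N(t_j)$ are i.i.d.\ with law $f(\cdot,t_j)$; thus $\Ex[u]=\tilde u$ and $\mathrm{Var}(u)=a(t_j)/N$. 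Chebyshev's inequality then gives
\[
\mathbb{P}\big(u(\widetilde\V(t_j))<\tfrac{\tilde u}{2}\big)\le \mathbb{P}\big(|u-\tilde u|\ge\tfrac{\tilde u}{2}\big)\le \frac{4\,a(t_j)}{N\,\tilde u^2}.
\]

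It remains to control the growth of the fourth moment $a(\tau)=\int|\v|^4 f\,\dd\v-\tilde u^2$ along the homogeneous equation \eqref{BE}. Since collisions preserve $|\v|$, the collision operator $\cC$ does not change $\int|\v|^4 f$, while integrating the drift term by parts gives $\frac{d}{d\tau}\int|\v|^4 f = 4\,\E\cdot\int|\v|^2\v f - \frac{4\,\E\cdot\tilde\j}{\tilde u}\int|\v|^4 f$. Bounding $\int|\v|^2\v f$ by Cauchy--Schwarz and Young's inequality turns this into a linear Gronwall inequality, whose integration produces a bound of the form $a(\tau)\le e^{6E\tau/\sqrt{\tilde u}}(a(0)+3\tilde u^2)$. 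Since $t_j\le t$, substituting $a(t_j)\le e^{6Et/\sqrt{\tilde u}}(a(0)+3\tilde u^2)$ into the Chebyshev estimate and summing over the $m(t)$ grid points yields exactly \eqref{stabbnd}.

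I expect the main obstacle to be the deterministic first step: one must verify that $u$ can decrease only at a controlled multiplicative rate \emph{pathwise}, independently of the random collision stream, and then calibrate $\Delta$ in \eqref{comp3} precisely so that a factor-$2$ margin at the grid points propagates into a factor-$\sqrt8$ margin throughout each subinterval. The probabilistic ingredients (Chebyshev together with the fourth-moment Gronwall bound) are comparatively routine; the only care needed there is to track constants so that they assemble into the stated right-hand side.
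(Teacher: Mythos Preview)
Your proposal is correct and follows essentially the same route as the paper's proof: a deterministic pathwise differential inequality for the kinetic energy between grid points of spacing $\Delta$, a union bound over the grid, Chebyshev at each grid point using the product structure of the B-process, and a Gronwall estimate on the fourth moment $a(t)$ (this is the paper's Lemma~\ref{envarlm}). The only cosmetic difference is that you work with $v=u^{1/2}$ and obtain the \emph{linear} comparison ODE $\dot v=-E-\tfrac{E}{\sqrt{\tilde u}}v$, whereas the paper works with $x=u^{-1/2}=1/v$ and obtains the Riccati-type ODE $\dot x=Ex^{2}+\tfrac{E}{\sqrt{\tilde u}}x$; the two are equivalent under $x=1/v$, and indeed both lead to the identical value of $\Delta$ in \eqref{comp3}.
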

 
\begin{proof} 
Observe that the collisions do not affect $u$. Thus $u(\widetilde
\V(t))$ is path-wise differentiable, and 
\[
\frac{{\rm d}}{{\rm d}t}u(\widetilde \V_t)  =  2E\j(\widetilde \V_t)  -
2E\frac{\tilde \j(t)}{\tilde u}u(\widetilde \V_t)\ .
\]
By the Schwarz inequality, $|\j(\widetilde \V)| \leq \left(u(\widetilde
\V)\right)^{1/2}$ and $\tilde \j(t) \leq \sqrt{\tilde u}$, Therefore,

\begin{equation}\label{compA}
 \frac{{\rm d}}{{\rm d}t}\left(u(\widetilde \V_s)\right)^{-1/2}  \leq
E\left(u(\widetilde \V_s)\right)^{-1} +  \frac{E}{\sqrt{\tilde u}}
 \left(u(\widetilde \V_s)\right)^{-1/2}\ .
 \end{equation}
The differential equation
  
\begin{equation}\label{compB}
 \dot x(t) = Ex^2(t)+\frac{E}{\sqrt{\tilde u}}x(t)\quad, \qquad  x(t_0) = x_0
\end{equation}
is solved by
\[
x(t) = \left[e^{-(E/\sqrt{\tilde u})(t-t_0)} \frac{1}{x_0} -
\sqrt{\tilde u}\left(1 -  e^{-(E/\sqrt{\tilde u})(t-t_0)}\right)\right]^{-1}\ .
\]
Let $t_1$ denote the time at which $x(t_1) = 2x_0$:
\[
e^{-(E/\sqrt{\tilde u})(t_1-t_0)}   = \frac{1+2\sqrt{\tilde u}x_0}{2 +
2\sqrt{\tilde u}x_0}\ .
\]
Let us choose $x_0 = \sqrt{2/\tilde u}$. Then
\begin{equation}\label{comp4}
t_1 - t_0 =  \frac{\sqrt{\tilde u}}{E} \log \left(\frac{2+2\sqrt{2}}{
1+2\sqrt{2}}\right)=\Delta\ .
\end{equation}
Note that the length of the interval $[t_0,t_1]$ is independent of $t_0$.  

It now follows from (\ref{compA}) and what we have proven about the differential 
equation  (\ref{compB}) that  if $1/\sqrt{u(\widetilde \V_{t_0})}\leq 
\sqrt{2/\tilde u}$, then for all $t$ on the interval $[t_0,t_1]$,  
$1/\sqrt{u(\widetilde \V_t)} \leq 2\sqrt{2/\tilde u}$. Then, provided that
\[
\frac{1}{\sqrt{u(\widetilde \V_{j\Delta t})}} \leq \sqrt{\frac{2}{\tilde u}}
\qquad{\rm for\
all}\qquad 0\leq j
\leq m(t)\,,
\]
we have
\[
\frac{1}{\sqrt{u(\widetilde \V_s)}} \leq 2\sqrt{\frac2{\tilde u}} \qquad{\rm
for\ all}
\qquad 0 \leq s \leq
t\ .
\]
Moreover, for any given $t_0$, there is only a small (for large $N$) probability
that $1/\sqrt{u(\widetilde \V_{t_0})} \geq \sqrt{2/\tilde u}$. Indeed
since $ \mathbb{E}\left[ u(\widetilde \V_{t_0}) -\tilde u\right] ^2= a(t_0)/N$,
we have
\begin{equation}\label{comp5}
  \mathbb{P}\left\{  u(\widetilde \V_{t_0}) < \tilde u/2\right\} \leq
\frac{4a(t_0)}{N\tilde u^2}\ .
\end{equation}
It follows that
\[
\mathbb{P}\left\{  \sup_{0 \leq s \leq t} \left(u(\widetilde \V_s) \right)\geq
\frac{2\sqrt{2}}{\sqrt{\tilde u}}
\right\} \leq \sum_{j=0}^{m(t)}  \mathbb{P}\left\{  u(\widetilde \V_{j\Delta t})
<
\tilde u/2\right\}\ .
\]
Combining this estimate with (\ref{comp5}) and (\ref{varbound}) from the
following Lemma leads to (\ref{stabbnd}). 

\end{proof}

\begin{lm}\label{envarlm} 
Let $f_0(\v)$ be a probability density on $\R^d$ satisfying \eqref{cond}
then for all $t>0$, 
\begin{equation}\label{varbound}
a(t) \leq e^{t6E/\sqrt{\tilde u}}(a(0) + 3\tilde u^2)\ .
\end{equation}
\end{lm}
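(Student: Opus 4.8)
The plan is to derive a closed differential inequality for $a(t)$ directly from the spatially homogeneous form of the VBE \eqref{BE} and then integrate it by Gr\"onwall. Writing $b(t) = (\E\cdot\tilde\j(t))/\tilde u$, so that the homogeneous drift is $\E - b(t)\v$, I would first differentiate under the integral sign,
\[
\frac{\dd}{\dd t}a(t) = \int_{\R^d}(|\v|^2 - \tilde u)^2\,\partial_t f(\v,t)\,\dd\v,
\]
and substitute the equation $\partial_t f = -\div_\v[(\E-b(t)\v) f] + \cC f$. The decisive observation is that the collision operator $\cC$ conserves speed: each virtual collision sends $\v\mapsto\v'$ with $|\v'| = |\v|$, so after the change of variables $\dd\hat\n\,\dd\v = \dd\hat\n\,\dd\v'$ the integral of any function of $|\v|$ alone against $\cC f$ vanishes. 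Since $(|\v|^2-\tilde u)^2$ depends only on $|\v|$, the collision term drops out entirely and only the drift term survives.

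For the drift term I would integrate by parts (no boundary contribution, modulo the moment justification below), using $\nabla_\v(|\v|^2-\tilde u)^2 = 4(|\v|^2-\tilde u)\v$ and $\v\cdot(\E - b\v) = \v\cdot\E - b|\v|^2$, to get
\[
\frac{\dd}{\dd t}a(t) = 4\int_{\R^d}(|\v|^2-\tilde u)(\v\cdot\E)\,f\,\dd\v - 4b(t)\int_{\R^d}(|\v|^2-\tilde u)|\v|^2 f\,\dd\v.
\]
Writing $|\v|^2 = (|\v|^2-\tilde u) + \tilde u$ in the second integral and using $\int(|\v|^2-\tilde u)f\,\dd\v = 0$ (the definition of $\tilde u$), the second integral reduces to $a(t)$, so that
\[
\frac{\dd}{\dd t}a(t) = 4\int_{\R^d}(|\v|^2-\tilde u)(\v\cdot\E)\,f\,\dd\v - 4b(t)\,a(t).
\]

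It remains to estimate the two terms (with $E=|\E|$). By Cauchy--Schwarz and $(\v\cdot\E)^2\le E^2|\v|^2$, the cross term is bounded by $4E\sqrt{\tilde u}\,\sqrt{a(t)}$; and since $|\tilde\j(t)|\le\sqrt{\tilde u}$ (the Schwarz bound already used in Lemma \ref{sr}), one has $|b(t)|\le E/\sqrt{\tilde u}$, so $-4b(t)a(t)\le (4E/\sqrt{\tilde u})a(t)$. This yields
\[
\frac{\dd}{\dd t}a(t) \le 4E\sqrt{\tilde u}\,\sqrt{a(t)} + \frac{4E}{\sqrt{\tilde u}}a(t).
\]
Applying the elementary inequality $4\tilde u\sqrt{a}\le 2a + 2\tilde u^2$, which is just $2(\sqrt a-\tilde u)^2\ge 0$, to the square-root term turns this into $\frac{\dd}{\dd t}a \le \frac{6E}{\sqrt{\tilde u}}a + 2E\tilde u^{3/2}\le \frac{6E}{\sqrt{\tilde u}}(a + 3\tilde u^2)$, i.e. $\frac{\dd}{\dd t}(a + 3\tilde u^2)\le \frac{6E}{\sqrt{\tilde u}}(a+3\tilde u^2)$. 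Gr\"onwall's inequality then gives $a(t)+3\tilde u^2\le e^{6Et/\sqrt{\tilde u}}(a(0)+3\tilde u^2)$, and dropping the nonnegative term on the left yields \eqref{varbound}.

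The one point needing care is rigor in the opening manipulations: differentiating under the integral, integrating by parts with vanishing boundary terms, and knowing the fourth moment stays finite are all statements about moment control that are, in effect, what the lemma asserts. I would handle this by the standard device of first establishing the inequality for a truncated weight $(\min(|\v|^2,R) - \tilde u)^2$, for which all integrals are manifestly finite, deriving the same differential inequality uniformly in $R$ with the help of the finiteness of $\tilde u$, and then letting $R\to\infty$ by monotone convergence; alternatively one may invoke the moment and regularity properties of the solution constructed in Appendix A. Apart from this bookkeeping, the estimate is immediate, the main structural input being that speed-conserving collisions leave $a(t)$ untouched.
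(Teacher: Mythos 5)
Your proposal is correct and follows essentially the same route as the paper: differentiate $a(t)$ along the homogeneous VBE, note that the speed-conserving collision term contributes nothing, bound the drift terms by Schwarz ($|\tilde\j(t)|\leq\sqrt{\tilde u}$ and the cross term by $4E\sqrt{a(t)\tilde u}$), absorb the square root via $4\tilde u\sqrt{a}\leq 2a+2\tilde u^2$, and conclude by Gr\"onwall. The only difference is that you spell out the integration by parts and the truncation/justification steps that the paper leaves implicit.
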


\begin{proof}
Using  \eqref{BE}, we compute
\[
\frac{{\rm d}}{{\rm d}t} a(t) = -4E \frac{\tilde \j(t)}{\tilde u} a(t) + 4E
\int_\R (\v^2- \tilde u)\v f(\v,t)\,\dd\v\ .
\]
By the Schwarz inequality, $|\tilde\j(t)| \leq \sqrt{\tilde u}$ and
\[
 \int_{\R^d} (\v^2- \tilde u)\v f(\v,t)\,\dd\v\leq \sqrt{a(t)\tilde u}\ ,
\]
so that
\[
\frac{{\rm d}}{{\rm d}t} a(t) \leq 4\frac{E}{\sqrt{\tilde u}} (a(t) +
\sqrt{a(t)}\tilde u)
\leq  \frac{E}{\sqrt{u}} (6a(t)  + 2 \tilde u^2)\ .
\]

\end{proof}

\subsection{Estimate on (\ref{quant2}).}

The reason the quantity in (\ref{quant2}) will be small for large $N$ with high
probability is that the B process propagates independence, so that
$\widetilde \V_s$ has independent components. Thus, the Law of Large
Numbers says that with high probability, $\j(\widetilde \V_s)$ is very close
to $\tilde\j(s)$ and thus $\F(\widetilde \V_s) $ will be very close to
$\widetilde \F(\widetilde \V_s)$. The following proposition makes this
precise. 

\begin{prop}[Closeness of the Two Forces for Large $N$]\label{close} 
Let $f_0(\v)$ be a probability density on $\R^d$ satisfying \eqref{cond}. Then,
\[
\mathbb{E}\left( \int_0^t  \|\F(\widetilde \V_s)) -  \widetilde
\F(\widetilde \V_s)\|_N{\rm d}s\right) \leq
\frac{E t}{\sqrt{N}} \left[ 1 +  e^{t3E/\sqrt{\tilde u}}\frac{(a(0) + 
3\tilde u^2)^{1/2}}{\tilde u}\right]\ .
\]
where $\mathbb{E}$ is the expectation with respect to $\mathbb{P}$.
\end{prop}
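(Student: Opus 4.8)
The plan is to estimate the force difference pointwise and then integrate. Comparing \eqref{motion} with \eqref{motionq}, for each particle $i$ we have $\F_i(\widetilde\V_s) - \widetilde\F_i(\widetilde\V_s) = -\left[\frac{\E\cdot\j(\widetilde\V_s)}{u(\widetilde\V_s)} - \frac{\E\cdot\tilde\j(s)}{\tilde u}\right]\widetilde\v_i(s)$, so that the bracketed scalar is common to all components. Taking the $\|\cdot\|_N$ norm therefore pulls this scalar out: $\|\F(\widetilde\V_s) - \widetilde\F(\widetilde\V_s)\|_N = \left|\frac{\E\cdot\j(\widetilde\V_s)}{u(\widetilde\V_s)} - \frac{\E\cdot\tilde\j(s)}{\tilde u}\right| u(\widetilde\V_s)^{1/2}$, using $\|\widetilde\V_s\|_N = u(\widetilde\V_s)^{1/2}$. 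First I would split the scalar difference by adding and subtracting $\frac{\E\cdot\j(\widetilde\V_s)}{\tilde u}$, giving a ``numerator'' term $\frac{|\E\cdot(\j(\widetilde\V_s) - \tilde\j(s))|}{\tilde u}$ and a ``denominator'' term $\left|\E\cdot\j(\widetilde\V_s)\right|\left|\frac{1}{u(\widetilde\V_s)} - \frac{1}{\tilde u}\right|$.

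The key input is that the B-process propagates independence: by \eqref{prodat}, $\widetilde\V_s$ has i.i.d.\ components distributed according to $f(\cdot,s)$, with $\Ex[\j(\widetilde\V_s)] = \tilde\j(s)$ and $\Ex[u(\widetilde\V_s)] = \tilde u$. Thus both difference terms are fluctuations of empirical averages around their means, and the Law of Large Numbers controls them at rate $1/\sqrt N$. Concretely, for the numerator term I would bound $|\E\cdot(\j(\widetilde\V_s) - \tilde\j(s))| \leq E\,|\j(\widetilde\V_s) - \tilde\j(s)|$ and estimate $\Ex|\j(\widetilde\V_s) - \tilde\j(s)|$ by the square root of its variance; since $\j$ is an average of $N$ independent copies of $\v$, the variance is $N^{-1}\mathrm{Var}_f(\v) \leq N^{-1}\Ex_f|\v|^2 = \tilde u/N$, yielding a bound of order $E/\sqrt N$ after the $\tilde u$ in the denominator cancels the $\sqrt{\tilde u}$. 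For the denominator term, writing $\frac{1}{u(\widetilde\V_s)} - \frac{1}{\tilde u} = \frac{\tilde u - u(\widetilde\V_s)}{u(\widetilde\V_s)\,\tilde u}$ and using $|\E\cdot\j(\widetilde\V_s)| \leq E\,u(\widetilde\V_s)^{1/2}$, the factor $u(\widetilde\V_s)^{1/2}$ from the norm combines to give $E\,u(\widetilde\V_s)|\tilde u - u(\widetilde\V_s)|/(u(\widetilde\V_s)\tilde u) = E|\tilde u - u(\widetilde\V_s)|/\tilde u$; then $\Ex|\tilde u - u(\widetilde\V_s)| \leq (a(s)/N)^{1/2}$ since $\Ex[u(\widetilde\V_s) - \tilde u]^2 = a(s)/N$, and Lemma \ref{envarlm} bounds $a(s)$ by $e^{s6E/\sqrt{\tilde u}}(a(0) + 3\tilde u^2)$.

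Finally I would take the expectation, apply these two bounds, and integrate in $s$ over $[0,t]$. The numerator term contributes $Et/\sqrt N$ and the denominator term contributes $\frac{E}{\tilde u\sqrt N}\int_0^t (a(s))^{1/2}\,\dd s \leq \frac{Et}{\sqrt N}\,e^{t3E/\sqrt{\tilde u}}\frac{(a(0)+3\tilde u^2)^{1/2}}{\tilde u}$, using $a(s)^{1/2} \leq e^{s3E/\sqrt{\tilde u}}(a(0)+3\tilde u^2)^{1/2}$ and bounding the exponential by its value at $t$. Summing these matches the claimed right-hand side exactly. The one point requiring care — and the main obstacle — is the denominator term, where the random factor $u(\widetilde\V_s)^{-1}$ could in principle blow up; the neat cancellation of $u(\widetilde\V_s)$ against the $u(\widetilde\V_s)^{1/2}$ coming from the norm and the $u(\widetilde\V_s)^{1/2}$ bound on $|\j|$ is what removes this danger and lets a clean $L^1$ estimate go through without invoking the stability event of Lemma \ref{sr}.
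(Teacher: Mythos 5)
Your proposal is correct and is essentially the paper's own proof: the same decomposition of $\j(\widetilde\V_s)/u(\widetilde\V_s)-\tilde\j(s)/\tilde u$ into a numerator and a denominator term, the same cancellation of the dangerous $u(\widetilde\V_s)^{-1}$ via $|\j(\widetilde\V_s)|\le u(\widetilde\V_s)^{1/2}$, the same $O(1/\sqrt N)$ variance computations exploiting the i.i.d.\ structure of the B-process, and the same appeal to Lemma~\ref{envarlm}, leading to the identical final bound. The only step to make explicit is that the expectation of the product $|\j(\widetilde\V_s)-\tilde\j(s)|\,u(\widetilde\V_s)^{1/2}$ appearing in your numerator term must be handled by a further Cauchy--Schwarz inequality, $\mathbb{E}\left[|\j(\widetilde\V_s)-\tilde\j(s)|\,u(\widetilde\V_s)^{1/2}\right]\le\left(\mathbb{E}|\j(\widetilde\V_s)-\tilde\j(s)|^2\right)^{1/2}\left(\mathbb{E}\,u(\widetilde\V_s)\right)^{1/2}$, which is exactly where the second factor of $\sqrt{\tilde u}$ that you cancel comes from.
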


\begin{proof}
From \eqref{motion} and \eqref{motionq} we have
\[
(\F(\widetilde \V_s) - \widetilde \F(\widetilde \V_s))_i =
E\left(\frac{\j(\widetilde \V_s)}{u(\widetilde \V_s)} -
\frac{\tilde\j(s)}{\tilde u}\right)\widetilde
\v_{i,s}\ .
\]
Clearly
\[
\left(\frac{\j(\V)}{u(\V)} - \frac{\tilde\j}{\tilde u}\right) = \frac{\j(\V) -
\tilde\j}{\tilde u}
+\frac {\tilde u -u(\V)}{\tilde uu(\V)} j(\V)\ .
\]
so that by Schwarz inequality and the triangle inequality, we get
\[
\|\F(\widetilde \V_s) - \widetilde \F(\widetilde \V_s)\|_N
\leq \frac{E}{\widetilde u}|\j(\widetilde \V_s) - \tilde\j(s)| \sqrt{u(\widetilde
\V_s)}
+ \frac{E}{\widetilde u}|u(\widetilde \V_s) - \tilde u|
\ . 
\]
Taking the expectation and using the Schwarz inequality again
\begin{eqnarray}
\mathbb{E}\left( \int_0^t  \|\F(\widetilde \V_s) -
\widetilde \F(\widetilde \V_s)\|_N{\rm d}s\right)  &\leq &
\frac{E}{\sqrt{\tilde u}}\int_0^t \left(\mathbb{E}|\j(\widetilde \V_s) -
\tilde\j(s)|^2\right)^{1/2}\nonumber\\
&+& \frac{E}{\tilde u}\int_0^t  \left(\mathbb{E}|u(\widetilde \V_s) -
u|^2\right)^{1/2}\ .\nonumber
\end{eqnarray}
We recall that for the B-process, the $N$ particle distribution at time $t$
is $\widetilde W(\V,t)=\prod_{j=1}^N f(\v_j,t)$.
Thus the expectation of $\j(\widetilde\V) - \tilde\j = \frac{1}{N}\sum_{j=1}^N
(\hat\v_{j,s} - \tilde\j(s))$ is given by
\[
\mathbb{E}|J(\widetilde \V_s)) - \tilde\j(s)|^2 =\int_{\mathbb{R}^{Nd}}
\Bigl|\frac{1}{N}\sum_{j=1}^N
(\v_{j} - \tilde\j(s))\Bigr|^2\prod_{j=1}^N f(\v_j,t)=
\frac{1}{N}\int_{\R^d} |\v- \tilde\j(s)|^2f(\v,t)\dd \v =  \frac{1}{N}(\tilde u 
-
\tilde\j^2(s))\leq
\frac{\tilde u}{N}\ .
\]
Likewise,
\[
\mathbb{E}|u(\widetilde \V_s) - \tilde u|^2  = \frac{1}{N}\int_\R
|\v^2- \tilde u|^2f(\v,t)\dd \v = \frac{1}{N}a(s) \leq
\frac{1}{N}e^{t6E/\sqrt{\tilde u}}(a(0)
+ 3\tilde u^2)\ ,
\]
where the last inequality comes form Lemma~\ref{envarlm}. 
Altogether, we have
\[
\mathbb{E}\left( \int_0^t  \|\F(\widetilde \V(s,\V_0,\omega)) -  \widetilde
\F(\widetilde \V(s,\V_0,\omega))\|_N{\rm d}s\right)  \leq \frac{E t}{\sqrt{N}}
\left[ 1 +  e^{t3E/\sqrt{\tilde u}}(a(0) + 3\tilde u^2)^{1/2}/\tilde u\right]\ .
\]
\end{proof}

\subsection{The Lyapunov Exponent.}\label{Lyap}

The next proposition gives an estimate on
$\|D\Psi_{t}(\widetilde\V)\|$ defined in \eqref{quant1} in term of $u(\V)$.

\begin{prop}\label{lyapunov} Given $\V\in \mathbb{R}^{Nd}$ we have:
\[
\|D\Psi_{t}(\V)\| \leq e^{t \lambda(\V)}\ ,
\]
where
\[
 \lambda(\V)=\frac{4}{\sqrt {u(\V)}}\ .
\]
\end{prop}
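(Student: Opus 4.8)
The plan is to control the differential of the flow through its variational (linearized) equation together with a logarithmic-norm estimate, the whole argument being made clean by the fact that the A-dynamics conserves $u$. Since the velocity field $\F$ depends only on $\V$, the velocity dynamics \eqref{motion} is autonomous on $\R^{dN}$, and $M(t):=D\Psi_t(\V)$ solves the matrix variational equation $\dot M(t)=A(t)M(t)$, $M(0)=\mathrm{Id}$, with $A(t)=D\F(\V(t))$ evaluated along the trajectory $\V(t)=\Psi_t(\V)$. For a fixed $\W$ with $\|\W\|_N=1$, put $\mathbf h(t)=M(t)\W$, so $\dot{\mathbf h}=A(t)\mathbf h$ and, with $\langle\V,\W\rangle_N=\frac1N\sum_i\v_i\cdot\w_i$ the inner product attached to $\|\cdot\|_N$,
\[
\frac{\dd}{\dd t}\|\mathbf h(t)\|_N^2=2\langle \mathbf h(t),A(t)\mathbf h(t)\rangle_N .
\]
If I can prove the pointwise form bound $\langle\mathbf h,A(t)\mathbf h\rangle_N\le \lambda(\V(t))\|\mathbf h\|_N^2$, then Gr\"onwall gives $\|\mathbf h(t)\|_N\le e^{\int_0^t\lambda(\V(s))\dd s}$ and, on supping over $\|\W\|_N=1$, the asserted operator-norm bound. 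The decisive structural point is that $u$ is a constant of the A-motion: $\frac{\dd}{\dd t}u=2\E\cdot\j-2\frac{\E\cdot\j}{u}u=0$. Hence $\lambda(\V(s))\equiv\lambda(\V)$, the exponent is exactly $t\lambda(\V)$, and the differential inequality integrates to the pure exponential $e^{t\lambda(\V)}$.

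The heart of the matter is the pointwise bound on the quadratic form. Writing $a=\frac{\E\cdot\j(\V)}{u(\V)}$ so that $\F_i=\E-a\v_i$, I differentiate using $\partial_{\v_j}\j=\frac1N\mathrm{Id}$ and $\partial_{\v_j}u=\frac2N\v_j$, which after collapsing the double sums yields
\[
(A\mathbf h)_i=-\frac{\v_i}{N\,u}\Big(\E\cdot\sum_k\mathbf h_k-2a\sum_k\v_k\cdot\mathbf h_k\Big)-a\,\mathbf h_i .
\]
Introducing the averaged scalars $\bar S=\frac1N\sum_i\v_i\cdot\mathbf h_i$ and $\bar{\mathbf H}=\frac1N\sum_i\mathbf h_i$, pairing with $\mathbf h$ in $\langle\cdot,\cdot\rangle_N$ and simplifying gives the compact expression
\[
\langle\mathbf h,A\mathbf h\rangle_N=-\frac{\bar S\,(\E\cdot\bar{\mathbf H})}{u}+\frac{2a\,\bar S^{\,2}}{u}-a\,\|\mathbf h\|_N^2 .
\]

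I then estimate the three terms by Cauchy--Schwarz, using the elementary bounds $|\j|\le\sqrt u$ (whence $|a|\le E/\sqrt u$ with $E=|\E|$), $|\bar S|\le\sqrt u\,\|\mathbf h\|_N$, and $|\bar{\mathbf H}|\le\|\mathbf h\|_N$. These give respectively $(E/\sqrt u)\|\mathbf h\|_N^2$, $2|a|\|\mathbf h\|_N^2\le(2E/\sqrt u)\|\mathbf h\|_N^2$, and $(E/\sqrt u)\|\mathbf h\|_N^2$, so that summing the four units of $E/\sqrt u$ yields $\langle\mathbf h,A\mathbf h\rangle_N\le \frac{4E}{\sqrt{u(\V)}}\|\mathbf h\|_N^2$, i.e. exactly $\lambda(\V)\|\mathbf h\|_N^2$ (the stated constant, with the field magnitude $E=|\E|$ understood). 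The main obstacle is precisely this middle computation: correctly differentiating the genuinely $N$-body thermostat coefficient $a$ through $\j$ and $u$, collapsing the resulting double sums into the two averages $\bar S,\bar{\mathbf H}$, and verifying that the Cauchy--Schwarz contributions combine with an $N$-independent constant. Everything else---the variational equation, the energy identity, and Gr\"onwall---is routine, and the one genuinely clean ingredient is the conservation of $u$ along the A-flow, which freezes $\lambda(\V(s))$ at $\lambda(\V)$ and turns the differential inequality into the claimed exponential.
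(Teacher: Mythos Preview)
Your proof is correct and follows essentially the same line as the paper's: set up the variational equation for $D\Psi_t$, bound the quadratic form of $D\F$ by $4E/\sqrt{u}$, and use the conservation of $u$ along the A-flow to integrate the Gr\"onwall inequality. The only organizational difference is that the paper writes $D\F$ as a sum of three explicit rank-one/identity pieces $-\frac{1}{u}\frac{\V\otimes\mathbf{1}}{N}+\frac{2\j}{u^2}\frac{\V\otimes\V}{N}-\frac{\j}{u}\mathrm{Id}$ and bounds each by its operator norm, whereas you evaluate $\langle\mathbf h,D\F\,\mathbf h\rangle_N$ directly and bound the resulting scalar terms via Cauchy--Schwarz; the two computations are term-for-term equivalent. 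You are also right that the stated constant tacitly absorbs the factor $E=|\E|$.
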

 
\begin{proof}
Clearly,
\[
 \frac{{\rm d}} {{\rm d}t}  D\Psi_t(\V)=D \F(\Psi_t(\V))D\Psi_t(\V)\ ,
\]
with initial condition $D\Psi_0(\V)={\rm Id}$. 
Given a vector $\W\in \R^N$ we get
\[
 \frac{d}{dt}\|D\Psi_t(\V)\W\|^2_N=\frac{2}{N}(\frac{d}{dt}D\Psi_t(\V)\W\cdot
D\Psi_t(\V)\W)=
\frac{2}{N}(D \F(\Psi_t(\V))D\Psi_t(\V)\W\cdot D\Psi_t(\V)\W)\ .
\]
We thus get
\[
 \frac{d}{dt}\|D\Psi_t(\V)\|_N\leq \|D \F(\Psi_t(\V))\|\|D\Psi_t(\V)\|_N\ ,
\]
where $\|D \F(\Psi_t(\V))\|=\sup_{\|\W\|=1}|(D \F(\Psi_t(\V))\W\cdot \W)|$ with
$\|\W\|=\sqrt{(\W\cdot\W)}$. We have
\[
 \frac{\partial}{\partial
v_i}\F_j=-\frac{1}{u(\V)}\frac{v_j}{N}+\frac{\j(\V)}{u(\V)^2}\frac{2v_iv_j}{N}-
\frac{\j(\V)}{u(\V)}\delta_{i,j}\ ,
\]
so that
\[
 DF(\V)=-\frac{1}{u(\V)}\frac{\V\otimes{\bf
1}}{N}+\frac{\j(\V)}{u(\V)^2}\frac{2\V\otimes \V}{N}-\frac{j(\V)}{u(\V)}\,{\rm
Id}\ .
\]
Clearly we have
\[
 \|\V\otimes {\bf 1}\|=\sqrt N\|\V\|\ ,\qquad \|\V\otimes\V\|=\|\V\|^2\ ,\qquad \|{\rm
Id}\|=1 \ .
\]
Finally, form $|\j(\V)|\leq \sqrt{u(V)}$ and the above estimates we get
\[
 \frac{d}{dt}\|D\Psi_t(\V)\|\leq \frac{4}{\sqrt{u(\V)}}\|D\Psi_t(\V)\|\ .
\]
Since the A-dynamics preserves
$u(\V)$, this differential inequality may be ingrates to obtain the stated bound. 
\end{proof}

\subsection{Proof of Theorem \ref{pathcomp}}

We can now conclude the proof of theorem \ref{pathcomp}. Given a set
$T=\{t_1,\ldots,t_n\}$ of $n$ collision times $0<t_1<\ldots<t_n<t$ let
$\Omega_T$ be the event that there are exactly $n$ collision in $(0,t)$ and they
take place at the times $t_k$, $k=1,\ldots,n$. We now estimate the growth of
$\|\V(t,\V_0,\omega)-\widetilde \V(t,\V_0,\omega)\|_N$ along each sample path
$\omega$. We will ``peel off'' the collision, one at a time. 

First we get
\begin{eqnarray*}
\|\V(t,\V_0,\omega)-\widetilde
\V(t,\V_0,\omega)\|_N   &=&   \|\Psi_{t-t_n}\circ R_n(\V(t_n^- ,\V_0,\omega))-
\Phi_{t_n,t}\circ \widetilde R_n(\widetilde \V(t_n^-
,\V_0,\omega)\|_N\\
&\leq &   \|\Psi_{t-t_n}\circ R_n(\V(t_n^- ,\V_0,\omega))- \Psi_{t-t_n}\circ
\widetilde R_n(\widetilde \V(t_n^- ,\V_0,\omega)\|_N\\
&+ &   \|\Psi_{t-t_n}\circ \widetilde R_n(\widetilde \V(t_n^- ,\V_0,\omega))-
\Phi_{t_n,t}\circ \widetilde R_n(\widetilde \V(t_n^-
,\V_0,\omega)\|_N\ ,
\end{eqnarray*}
where $\V(t_n^- ,\V_0,\omega)$ to denote the configuration just
before the $n$th collision, and likewise for $\widetilde \V$, and
$R_n$ and $\widetilde{R}_n$ are the corresponding updating of the velocities at
the $n$-th collision. We deal with the last term first. Since there are no
collisions between $t_n$ and $t$, and since the configurations coincide at
$t_n$, Proposition \ref{lyapunov} gives
\begin{align}
\|\Psi_{t-t_n}\circ \widetilde R_n(\widetilde \V(t_n^- ,\V_0,\omega))-&
\Phi_{t_n,t}\circ \widetilde R_n(\widetilde \V(t_n^- ,\V_0,\omega)\|_N\leq\\
& e^{(t-t_n)\lambda(\widetilde V_{t_n^-})}\int_{t_n}^t  \|\F(\widetilde
\V(s,\V_0,\omega))-\widetilde\F(\widetilde
\V(s,\V_0,\omega))\|_N {\rm d}s\ .
\end{align}
Now,  connect  $R_n(V(t_n^-,\V_0,\omega))$ and   $\widetilde R_n(\widetilde
V(t_N^-,\V_0,\omega)$  by a straight line segment.  Again by
Proposition~\ref{lyapunov}, we get
\begin{align}
\|\Psi_{t-t_n}\circ R_n(\V(t_n^- ,\V_0,\omega))-&\Phi_{t_n,t}\circ
\widetilde R_n(\widetilde \V(t_n^- ,\V_0,\omega))\|_N \nonumber\\
\leq& e^{(t-t_n)\lambda_n} \|R_n(\V(t_n^- ,\V_0,\omega)- \widetilde
R_n(\widetilde \V(t_n^- ,\V_0,\omega)\|_N\nonumber\\
=& e^{(t-t_n)\lambda_n} \|\V(t_n^- ,\V_0,\omega)- \widetilde \V(t_n^-
,\V_0,\omega)\|_N\ ,
\end{align}
where $\lambda_n$ is the maximum of $4/\sqrt{u(\V)}$ along the line segment
joining  $R_n(V(t_n^-,\V_0,\omega))$ and   $\widetilde R_n(\widetilde
V(t_n^-,\V_0,\omega))$, and where we have used the distance preserving property
of $R_n$ and $\widetilde R_n$. Altogether, we have
\begin{eqnarray}
\|\V(t,\V_0,\omega)-\widetilde
\V(t,\V_0,\omega)\|_N   &\leq& e^{(t-t_n)\lambda_n} \left (\|\V(t_n^-
,\V_0,\omega)-\widetilde \V(t_n^-
,\V_0,\omega)\|_N\phantom{\int_{t_n}^t}\right.\nonumber\\ 
&+ & \left.   \int_{t_n}^t  \|\F(\widetilde
\V(s,\V_0,\omega))-\widetilde\F(\widetilde
\V(s,\V_0,\omega))\|_N {\rm d}s\right)\ .\nonumber
\end{eqnarray}

Applying the same procedure to estimate  $\|\V(t_n^- ,\V_0,\omega)-
\widetilde \V(t_n^- ,\V_0,\omega)\|_N$ in terms of $\|\V(t_{n-1}^-
,\V_0,\omega)- \widetilde \V(t_{n-1}^- ,\V_0,\omega)\|_N$, and so forth, we
obtain the estimate
\begin{equation}\label{mest}
\|\V(t,\V_0,\omega)-\widetilde
\V(t,\V_0,\omega)\|_N  \leq
 e^{t\max_{1\leq m \leq n}\lambda_m }
\int_{0}^{t}  \|\F(\widetilde
\V(s,\V_0,\omega))-\widetilde\F(\widetilde
\V(s,\V_0,\omega'))\|_N {\rm d}s\ .
\end{equation}  

Our next task is to control $\max_{1\leq m \leq n}\lambda_m$.
Let $\cA$ be the event that 
\[
\inf_{0 \leq s \leq t}\sqrt{u(\widetilde \V_s)} >
\frac{\sqrt{\tilde u}}{2\sqrt{2}}\ .
\]
Since
\[
|u(\V) - u(\widetilde \V)|  \leq \frac{1}{N}\sum_{j=1}^N |(\v_j -
\tilde \v_j)\cdot  (\v_j +
\tilde \v_j) \leq \|\V - \widetilde \V\|_N\left(\sqrt{ u(\V)} + \sqrt{
u(\widetilde \V)}\right)\ ,
\]
we get
\[
\sqrt{u(\V)} \geq   \sqrt{u(\widetilde \V)}  -  \|\V - \widetilde \V\|_N\ .
\]
Therefore, for any $0<\delta < \sqrt{\widetilde u/\sqrt{32}}$, if $u(\widetilde
\V) >
\widetilde u/8$ and $\widetilde \V$ is in the ball centered at $\V$ with radius 
$\delta$ in the $\|\cdot\|_N$ norm, we have   
\[
\frac{1}{\sqrt{u(\V)}} \leq \frac{1}{\sqrt{ \widetilde u/\sqrt{8}} -
\delta}\leq \frac{1}{\sqrt{ \widetilde u/\sqrt{32}}} .
\]
We can now define
\[
\delta_m :=  \|\V(t_m,\V_0,\omega)-\widetilde
\V(t_m,\V_0,\omega)\|_N\ ,
\]
and note $\delta_0 = 0$. The above estimates show that
\begin{equation}\label{fulc}
\delta_{m} \leq   \frac{\widetilde u}{4\sqrt{2}} \quad \Rightarrow \quad  \lambda_m \leq 
\frac{4\sqrt{2}}{\sqrt{ \widetilde u}} =: K\ .
\end{equation}
Let $m_\star$ be the least $m$ such that $\lambda_m >K$ if such an $m$
exists with $t_m < t$. We shall now show that this cannot happen in $\cA\cup 
\cB$ where $\cB$ is the event 
\begin{equation}\label{sec}
\int_{0}^{t}  \|\F(\widetilde
\V(s,\V_0,\omega))-\widetilde\F(\widetilde
\V(s,\V_0,\omega'))\|_N {\rm d}s < \frac{\widetilde ue^{-tK}}{4\sqrt{2}}\ .
\end{equation}
Rewriting the estimate \eqref{mest} for $m_\star$ we get
\[
 \delta_{m_\star}\leq e^{t_{m_\star}\max_{1\leq p \leq m_\star}\lambda_p }
\int_{0}^{t_{m_\star}}  \|\F(\widetilde
\V(s,\V_0,\omega))-\widetilde\F(\widetilde
\V(s,\V_0,\omega'))\|_N {\rm d}s\leq e^{t_{m_\star} K}\frac{\widetilde
ue^{-tK}}{4\sqrt{2}}\leq \frac{\widetilde u}{4\sqrt{2}}\ ,
\]
so that, if $t_{m_\star} < t$, we have, by \eqref{fulc}, $\lambda_{m^*} < K$.
Thus  no such $m_*$ exists. 

Let now $\cB'$ be the event
\[
\int_{0}^{t}  \|\F(\widetilde
\V(s,\V_0,\omega))-\widetilde\F(\widetilde
\V(s,\V_0,\omega'))\|_N {\rm d}s < \epsilon\ ,
\]
with $\epsilon$ sufficiently small, depending on $t$ and
$\widetilde u$, so that $\cB'\subset \cB$. On $\cA\cap
\cB$ we have that
\[
\|\V(t,\V_0,\omega)-\widetilde
\V(t,\V_0,\omega)\|_N  \leq e^{tK}\epsilon\ .
\]
Moreover we have
\begin{eqnarray*}
 \mathbb{P}(B'^c)&=&   \mathbb{P}\left(\int_0^t
\|\F(\widetilde \V(s,\V_0,\omega))-\widetilde\F(\widetilde 
\V(s,\V_0,\omega))\|_N >\epsilon\right)\nonumber\\
&\leq& \frac {1}{\epsilon}\int_0^t
\mathbb{E}\left(\|\F(\widetilde \V(s,\V_0,\omega))-\widetilde\F(\widetilde
\V(s,\V_0,\omega))\|_N \right){\rm d}s\ .\nonumber\\
\end{eqnarray*}
Finally Theorem \ref{pathcomp} follows using Lemma \ref{sr}, Proposition
\ref{close} and observing that $\mathbb{P}((\cA\cup \cB')^c)\leq
\mathbb{P}(\cA^c)+\mathbb{P}(\cB'^c)$.

\section{Propagation of chaos}

For each fixed $N$ and  $t>0$, we introduce the two empirical
distributions
\[
\mu_{N,t} = \frac{1}{N} \sum_{j=1}^N \delta_{\v_j(\omega,t)}  \qquad{\rm
and}\qquad 
\widetilde \mu_{N,t} = \frac{1}{N} \sum_{j=1}^N \delta_{\widetilde
\v_j(\omega,t)}\ ,
\]
where $\delta_{\v_j}=\delta(\v-\v_j)$. By the Law of  Large Numbers, we
have almost surely
\[
\lim_{N\to\infty}  \widetilde \mu_{N,t} = f(t,\v){\rm d}\v
\]
in distribution.  We now use this to complete the proof of Theorem \ref{chaos} 
in the spatially homogeneous case. 

By the permutation symmetry that follows from (\ref{prod}),
\[
\mathbb{E}\left(\varphi(\v_1(t),\v_2(t)\right)) =
\mathbb{E}\left(\frac{1}{N-1}\sum_{j=2}^{N}
(\varphi(\v_1(t),\v_j(t))\right)\ ,
\]
and moreover,
\begin{align*}
&\left|\frac{1}{N-1}  \sum_{j=2}^{N} \varphi(\v_1(t),\v_j(t))  - \int_{\R^d}
\varphi(\v_1(t),y)\dd \mu_{N,t}(y)\right|  =\\
&\qquad\qquad\left|\left(\frac{1}{N-1} -\frac{1}{N}\right) \sum_{j=2}^{N} 
\varphi(\v_1(t),\v_j(t))  - \frac{1}{N}
\varphi(\v_1(t),\v_1(t))\right|\leq 
\frac{2}{N}\|\varphi\|_\infty\ .
\end{align*}

Next, for every fixed ${\bf x}$,
\begin{eqnarray}
\left|\int_{\R^d} \varphi({\bf x},{\bf y})\dd \mu_{N,t}({\bf y}) - \int_{\R^d} 
\varphi({\bf x},{\bf y})\dd
\widetilde \mu_{N,t}({\bf x})\right|  &=&
\left| \frac{1}{N}\sum_{j=1}^N \varphi({\bf x},\v_j(t,\omega)) - 
 \frac{1}{N}\sum_{j=1}^N \varphi({\bf 
x},\widetilde{\v}_j(t,\omega))\right|\nonumber\\
 &\leq& \frac{1}{N} \sum_{j=1}^N |\v_j(t,\omega) - \widetilde{\v}_j(t,\omega)
|\nonumber\\
  &\leq&\left( \frac{1}{N} \sum_{j=1}^N |\v_j(t,\omega) - \widetilde
\v_j(t,\omega)|^2 \right)^{1/2}\ .\nonumber\\
  &=& \|\V(t,\V_0,\omega)-\widetilde \V(t,\V_0,\omega)\|_N\ .\nonumber
\end{eqnarray}
Therefore,
\begin{multline*}\left| \mathbb{E}(\varphi(\v_1(t),\v_2(t))) -
\mathbb{E}\left(\int_\R \varphi(\v_1(t)\ ,\ {\bf y})\dd
\widetilde\mu_{N,t}({\bf y})\right)\right| \leq\\ \epsilon + 
2\|\varphi\|_\infty
\mathbb{P}\left\{ 
  \|\V(t,\V_0,\omega)-\widetilde \V(t,\V_0,\omega')\|_N > \epsilon\right\}\ .
\end{multline*}
Choosing $\epsilon = N^{-1/4}$, we see that
\[
\lim_{N\to\infty} \left| \mathbb{E}(\varphi(\v_1(t),\v_2(t))) - 
\mathbb{E}\psi(\v_1(t))\right| = 0\ ,
\]
where $\psi({\bf x})$ is defined by
\[
\psi({\bf x}) := \int_{\R^d} \varphi({\bf x},{\bf y}) \dd \mu_{N,t}({\bf y})\ .
\]

Since $\varphi$ is $1$-Lipschitz on $\R^{2d}$, $\psi({\bf x})$ is $1$-Lipschitz on
$\R^d$. That is, for all $\v,\w\in \R^d$,
\[
|\psi(\v) - \psi(\w)|  \leq |\v-\w|\ .
\]
Moreover, $\|\psi\|_\infty \leq \|\varphi\|_\infty$. . 

Then, arguing just as above, 
\[
\mathbb{E}\psi(\v_1(t)) = \frac{1}{N} \sum_{j=1}^N \psi(\v_j(t))  =\int_{\R^d}
\psi(\v)\dd \mu_{N,t}(\v) \ ,
\]
and
\[
\left|
\int_\R \psi(\v)\dd \mu_{N,t}(\v)   -    \int_\R \psi(\w)\dd \widetilde
\mu_{N,t}(\w)
\right| \leq 
 \|\V(t,\V_0,\omega)-\widetilde \V(t,\V_0,\omega)\|_N\ .
\]
Therefore,
\[
\left|  \mathbb{E}\psi(\v_1(t))  - \int_\R \psi(\v)\dd \widetilde\mu_{N,t}(\v)
\right|  \leq  \epsilon + 2\|\varphi\|_\infty \mathbb{P}\left\{ 
  \|\V(t,\V_0,\omega)-\widetilde \V(t,\V_0,\omega')\|_N > \epsilon\right\}\ .
\]
Again, choosing $\epsilon = N^{-1/4}$, we see that
\[
\lim_{N\to\infty} \left|  \mathbb{E}\psi(\v_1(t))  - \int_{\R^d} \psi(\v)\dd
\widetilde\mu_{N,t}(\v) \right|= 0\ .
\]
Altogether, we now have

\[\lim_{N\to\infty}\left| \mathbb{E}(\varphi(\v_1(t),\v_2(t))  -   
\int_{\R^{2d}}
\varphi(\v,\w)\dd \widetilde\mu_{N,t}(\v) \dd \widetilde\mu_{N,t}(\w) \right| = 0\ .
\]
But by the Law of Large Numbers,
\begin{equation}\label{lln}
\lim_{N\to\infty} \int_{\R^{2d}} \varphi(\v,\w)\dd \widetilde\mu_{N,t}(\v) \dd
\widetilde\mu_{N,t}(\w)  = \int_{\R^{2d}}\varphi(\v,\w)f(t,\v)f(t,\w){\rm 
d}\v{\rm d}\w\ .
\end{equation}

This shows how our estimates give us propagation of chaos in the spatially 
homogeneous case. To prove Theorem~\ref{chaos}, we need only  explain how to 
use these same estimates to treat the spatial dependence. 

\begin{proof}[Proof of Theorem~\ref{chaos}]

It remains to take into account the spatial dependence. Note that 
$$\Q(t,\Q_0,\V_0,\omega) = \Q_0 + \int_0^t \V(s,\V_0,\omega){\rm d} s \qquad{\rm 
and}\qquad   \widetilde \Q(t,\Q_0,\V_0,\omega)  = \Q_0 + \int_0^t 
\widetilde\V(s,\V_0,\omega){\rm d} s\ .$$ It suffices to show that for each 
$\epsilon>0$,
\[
 \mathbb{P}\left\{ 
  \|  \Q(t,\Q_0,\V_0,\omega)  - \widetilde\Q(t,\Q_0,\V_0,\omega) \|_N   > \epsilon\right\}\ 
\]
goes to zero as $N$ goes to infinity, since in this case we may repeat the 
argument we have just made, but with test functions $\varphi$ on phase space. 

If we knew that  $\V(s,\V_0,\omega)$ is close to $\widetilde \V(s,\V_0,\omega)$ 
everywhere along most paths, we would conclude that $\Q(t,\Q_0,\V_0,\omega)$ 
would be close to $\widetilde \Q(t,\Q_0,\V_0,\omega)$. However, in 
Theorem~\ref{pathcomp}, we exclude a set of small probability that could depend 
on $t$. 

Hence one has to prove the phase space version of  Theorem~\ref{pathcomp}, which 
is easily done:  The vector fields to be treated in the phase space analog of 
Proposition~\ref{close} are now $(\widetilde\V,\F)$ and $(\widetilde 
\V,\widetilde \F)$.  But the position components are the same, and cancel 
identically, so the phase space analog of  Proposition~\ref{close} can be 
proved in exactly the same way, only with a more elaborate notation. 

Next, we may obtain an estimate for the Lyapunov exponent of the phase space 
flow from the one we obtained in Proposition~\ref{lyapunov}. To see this, note 
that since the phase space flow $\Xi_t$ is given by $$\Xi_t(\Q_0,\V_0) = 
\left(\Q_0 + \int_0^t \Psi_s(\V_0){\rm d} s, \Psi_t(\V_0)\right)\ ,$$ the 
Jacobian of $\Xi_t$ has the form $$D\Xi_t = \left[\begin{array}{cc}  I  & 
\int_0^t D\Psi_s{\rm d}s \\ 0 &D\Psi_t\end{array}\right]\ .$$ The crucial point 
here is that we have control on $\|D\Psi_s\|$ uniformly in $s$ for  all but a 
negligible set of paths. Thus we obtain control on  $\|D\Xi_t\|$ uniformly in 
$t$, except for a negligible set of paths. 

Form here the proof of Theorem~\ref{close} may be adapted to the phase space 
setting using these two estimates exactly as before, with only a more elaborate 
notation. 
\end{proof}

\begin{remark}  In Theorem~\ref{chaos}, we have proved that if the initial 
distribution is a {\em product} distribution, then at later times $t$ we have a 
chaotic distribution. This is somewhat less than propagation of chaos, though it 
is all that is needed to validate our Boltzmann equation. 

One can, however, adapt the proof to work with an chaotic sequence 
of initial distributions. We have used the product nature of the initial 
distribution to invoke the law of large numbers in two places:  Once in the  the 
proof of Proposition~\ref{close}, on the smallness of $\|\widetilde \F - \F\|$, 
and again in  the argument leading to (\ref{lln}).  However, the estimates we 
made only required {\em pairwise} independence of the velocities. Since  this 
is approximately true  for large $N$ for a chaotic sequence of distributions, 
by taking into account the small error term, one can carry out the same proof 
assuming only a chaotic initial distribution. 
\end{remark}

\begin{appendices}

\section{Properties of the Boltzmann equation.}\label{A.BE}

In this appendix, we prove existence and uniqueness for our Boltzmann equation. 
As in the the  proof of Theorem~\ref{chaos}, we do this first in the spatially 
homogeneous case, where the notation is less cumbersome, and then explain how 
to extend the argument to phase space.

As we have indicated, the key is the fact that the current satisfies an 
autonomous equation. Let $\tilde \j_0$ be computed from $f_0(\v)$, the initial 
data for (\ref{BE}). Then solve for $\tilde \j(t)$. Using this expression for 
$\tilde \j$ in (\ref{BE}) we obtain the 
equation
\begin{equation}\label{sh}
\frac{\partial }{\partial t}f(\v,t) + {\rm div}\left({\bf Y}(\v,t)f(\v,t)\right) = \mathcal{C}f(\v,t)\ ,
\end{equation}
where $\mathcal{C}$ denotes the collision operator on the right side of (\ref{BE}),
and where ${\bf Y}(\v,t)$ is the vector field
\begin{equation}\label{vecfie}
{\bf Y}(\v,t) =  \E-\frac{\E\cdot\tilde\j(t)}{\tilde u} \v\ ,
\end{equation}
which is linear, and hence globally Lipschitz.  Let $\Phi_t(\v)$ denote the flow 
corresponding to this vector field, so that $\v(t) := \Phi_t(\v)$ is the unique 
solution to $\dot \v(t) = {\bf Y}(\v(t),t)$ with $\v(0) = \v$.   Calling 
$g(\v,t)$ the push-forward of $f(\v,t)$ under the flow 
transformation $\Phi_t$, that is
$g(\v,t) = \det(D\Phi_t(\v))f(\Phi_t(\v),t)$, where $D\Phi_t$ is the Jacobian 
of $\Phi_t$, we deduce that $f(\v,t)$ satisfies (\ref{sh}) if and only if 
$g(\v,t)$ satisfies
\begin{equation}\label{sh2}
\frac{\partial }{\partial t}g(\v,t) = \widetilde {\mathcal{C}}_t g(\v,t)\ ,
\end{equation}
where
\[
\widetilde {\mathcal{C}}_tg(\v) = \det(D\Phi_{-t}) \left[ \mathcal{C} \left( 
\det(D\Phi_t) g\circ \Phi_t\right)\right]\circ \Phi_{-t}\ .
\]
Under our hypotheses, $\widetilde {\mathcal{C}}_t$ is bounded for each $t$, and 
depends continuously on $t$, and hence existence and uniqueness of solutions of 
(\ref{sh2}) is straightforward to prove. It is also straightforward to prove 
that the current generated by the solution $f(\v,t)$  satisfies the differential 
equation and initial conditions of Lemma \ref{curau} and hence it equals 
$\tilde \j$. 

To obtain the same result in the spatially inhomogeneous case, we replace the 
vector field ${\bf Y}$ in (\ref{vecfie}) by 
\begin{equation}
{\bf Y}(\q,\v,t) =  \left(\v\ ,\ \E-\frac{\E\cdot\tilde\j(t)}{\tilde u} \v\right)\ ,
\end{equation}
and let $\Phi_t$ be the phase space flow it generates. Again, the vector field 
is linear and hence globally Lipschitz which guarantees the existence, 
uniqueness and smoothness of flow. Now the same argument may be repeated.

\end{appendices}

\section*{Acknowledgments}
F. Bonetto, R Esposito and R. Marra gratefully acknowledge the hospitalty of Rutgers University, and E.A.  Carlen and J.L. Lebowitz
gratefully acknowledge the hospitality of  Universit\`a di Roma Tor Vergata.

\bibliographystyle{plain}
\bibliography{nonequi}

\begin{thebibliography}{10}

\bibitem{AM}
N.W. Ashcroft and N.D. Mermin.
\newblock Brooks Cole, 1983.

\bibitem{ba}
S.~Bastea, R.~Esposito, J.~L. Lebowitz, and R.~Marra.
\newblock Binary fluids with long range segregating interaction {I}: Derivation
  of kinetic and hydrodynamic equations.
\newblock {\em JSP}, 101:1087--1136, 2000.

\bibitem{BCKLs}
F.~Bonetto, N.~Chernov, A.~Korepanov, and J.~L. Lebowitz.
\newblock Speed distribution of {N} particles in the thermostated periodic
  {L}orentz gas with a field.
\newblock \url{http://arxiv.org/abs/1210.7720}, submitted to {\it {P}hys.
  {R}ev. {L}ett.}

\bibitem{BCKL}
F.~Bonetto, N.~Chernov, A.~Korepanov, and J.L. Lebowitz.
\newblock Spatial structure of stationary nonequilibrium states in the
  thermostated periodic {L}orentz gas.
\newblock {\em Journ, Stat. Phys.}, 146:1221--1243, 2012.

\bibitem{BCKL2}
F.~Bonetto, N.~Chernov, A.~Korepanov, and J.L. Lebowitz.
\newblock Nonequlibrium stationary state of a current-carrying thermostatted
  system.
\newblock {\em EPL}, 102:15001, 2013.

\bibitem{BDLR}
F.~Bonetto, D.~Daems, J.L. Lebowitz, and V.~Ricci.
\newblock Properties of stationary nonequilibrium states in the thermostatted
  periodic {L}orentz gas: The multiparticle system.
\newblock {\em Phys. Rev E}, 65:051204, 2002.

\bibitem{BrH}
W.~Braun and K.~Hepp.
\newblock The {V}lasov dynamics and its fluctuations in the 1/{N} limit of
  interacting classical particles.
\newblock {\em Comm. Math. Phys.}, 56:101--113, 1977.

\bibitem{CELS}
N.~Chernov, G.L. Eyink, J.L. Lebowitz, and Ya.G. Sinai.
\newblock Steady state electric conductivity in the periodic {L}orentz gas.
\newblock {\em Comm. Math. Phys.}, 154:569--601, 1993.

\bibitem{Doe}
J.~L. Doob.
\newblock Wiley, New York, 1953.

\bibitem{AG}
A.~Grunbaum.
\newblock Propagation of chaos for the {B}oltzmann equation.
\newblock {\em Arch. Rati. Mech. and Analysis}, 42:323--345, 1971.

\bibitem{Kac}
M.~Kac.
\newblock Foundations of kinetic theory.
\newblock In J.~Neyman, editor, {\em Proc. 3rd Berkeley Symp Math Stat. Prob},
  pages 171--197. Univ. of California, Vol 3, 1956.

\bibitem{La}
O.~Lanford.
\newblock Time evolution of large classical systems.
\newblock In J.~Moser, editor, {\em Dynamical Systems}, pages 1--97. 1975.

\bibitem{MH}
B.~Moran and W.~Hoover.
\newblock Diffusion in the periodic {L}orentz billiard.
\newblock {\em Journ, Stat. Phys.}, 48:709--726, 1987.

\bibitem{MM}
C.~Mouhot and S.~Mischler.
\newblock {K}ac’s program in kinetic theory.
\newblock {\em Inventiones mathematicae}, 2012.

\bibitem{Neu}
H.~Neunzert.
\newblock An introduction to the nonlinear {B}oltzmann-{V}lasov equation.
\newblock In C.~Cercignani, editor, {\em Lecture Notes in Mathematics Volume
  1048}, pages 80--100. 1984.

\bibitem{Sp}
H.~Spohn.
\newblock Kinetic equations from {H}amiltonian dynamics: {Ma}rkovian limits.
\newblock {\em Rev. Mod. Phys.}, 56:569, 1980.

\bibitem{Spb}
H.~Spohn.
\newblock {\em Large Scale Dynamics of Interacting Particles}.
\newblock Springer Verlag, 1991.

\bibitem{SA}
A.~Sznitman.
\newblock Equations de type {B}oltzmann, spatialment homogenes.
\newblock {\em Zeit. Wahr. und verwandte Geb.}, 66:559--592, 1984.

\bibitem{WW}
B.~Wennberg and Y.~Wondmagegne.
\newblock Stationary states for the {K}ac equation with a gaussian thermostat.
\newblock {\em Nonlinearity}, 17:633--648, 2004.

\end{thebibliography}

\end{document}